\newcommand{\set}[1]{\{#1\}}
\newcommand{\relmiddle}[1]{\mathrel{}\middle#1\mathrel{}}
\newcommand{\inset}[2]{\left\{#1 \relmiddle| #2\right\}}
\newcommand{\size}[1]{| #1|}
\newcommand{\order}[1]{O(#1)}
\newtheorem{lemma}{Lemma}
\newtheorem{theorem}[lemma]{Theorem}
\newtheorem{proposition}[lemma]{Proposition}
\newtheorem{definition}[lemma]{Definition}
\newcommand{\comp}[1]{\mu(#1)}
\title{Polynomial-Delay Enumeration of Large Maximal Matchings}
\author[1]{Yasuaki Kobayashi}
\author[2]{Kazuhiro Kurita}
\author[3]{Kunihiro Wasa}
\affil[1]{Graduate School of Information Science and Technology, Hokkaido University, Sapporo, Japan}
\affil[2]{Graduate School of Informatics, Nagoya University, Nagoya, Japan}
\affil[3]{Department of Computer Science and Engineering, Toyohashi University of Technology, Aichi, Japan}
\date{}
\begin{document}

\maketitle

\begin{abstract}
Enumerating matchings is a classical problem in the field of enumeration algorithms.
There are polynomial-delay enumeration algorithms for several settings, such as enumerating perfect matchings, maximal matchings, and (weighted) matchings in specific orders.
In this paper, we present polynomial-delay enumeration algorithms for maximal matchings with cardinality at least given threshold $t$.
Our algorithm enumerates all such matchings in $O(nm)$ delay with exponential space, where $n$ and $m$ are the number of vertices and edges of an input graph, respectively.
We also present a polynomial-delay and polynomial-space enumeration algorithm for this problem.
As a variant of this algorithm, we give an algorithm that enumerates $k$-best maximal matchings that runs in polynomial-delay.
\end{abstract}

\section{Introduction}\label{sec:intro}

Computing a maximum cardinality matching in graphs is a fundamental problem in combinatorial optimization and has numerous applications in many theoretical and practical contexts.
This problem is well known to be solvable in polynomial time by the famous blossom algorithm due to Edmonds~\cite{Edmonds:paths:1965}.
This algorithm runs in time $O(n^2m)$, where $n$ and $m$ are the numbers of vertices and edges of an input graph, and the running time is improved to $O(n^{1/2}m)$~\cite{DBLP:conf/focs/MicaliV80} and $O(n^{\omega})$~\cite{DBLP:conf/focs/MuchaS04}, where $\omega < 2.37$ is the matrix multiplication exponent.

Enumerating matchings in graphs is also a well-studied problem in the literature~\cite{Uno:NIIjournal,DBLP:conf/isaac/Uno97,DBLP:journals/dam/ChegireddyH87,Fukuda:1992,Uno:2015}.
In this problem, we are given a graph $G = (V, E)$ and the goal is to compute all matchings of $G$ satisfying some prescribed conditions.
This is motivated by a typical situation that a single optimal matching can be inadequate for real-world problems since intricate constraints and preferences emerging in real-world problems are overly simplified or even ignored to solve the problem efficiently.
In this situation, multiple near optimal solutions are preferable rather than a single optimal solution.

There are two lines of research for enumerating matchings.
The problem of enumerating inclusion-wise maximal matchings is a special case of enumerating maximal independent sets or cliques in graphs, which is one of the most prominent problems in the field of enumeration algorithms~\cite{DBLP:journals/algorithmica/CominR18,Alessio:Roberto:ICALP:2016,Johnson:Yannakakis:Papadimitriou:IPL:1988,DBLP:conf/swat/MakinoU04,DBLP:journals/siamcomp/TsukiyamaIAS77}.
Tsukiyama et al.~\cite{DBLP:journals/siamcomp/TsukiyamaIAS77} showed that the problem of enumerating all maximal independent sets in graphs is solvable in $\order{nm}$ delay and polynomial space.
Johnson et al.~\cite{Johnson:Yannakakis:Papadimitriou:IPL:1988} also discussed a similar algorithm for this problem.
Makino and Uno~\cite{DBLP:conf/swat/MakinoU04} and Comin and Rizzi~\cite{DBLP:journals/algorithmica/CominR18} improved the running time for dense graphs via fast matrix multiplication algorithms.
For maximal matching enumeration, Uno~\cite{Uno:NIIjournal} gave an $\order{n + m + \Delta N}$-time algorithm for enumerating all maximal matchings of graphs, which substantially improves the known algorithm for enumerating maximal independent sets in general graphs~\cite{DBLP:journals/siamcomp/TsukiyamaIAS77} when input graphs are restricted to line graphs.
Here, $\Delta$ is the maximum degree and $N$ is the number of maximal matchings in an input graph.

The other line of work is to enumerate matchings with cardinality or weight constraints.
One of the best known results along this line is based on \emph{$k$-best enumeration}~\cite{DBLP:reference/algo/Eppstein16}.
Here, we say that an enumeration algorithm for (weighted) matchings is a \emph{$k$-best enumeration algorithm} if given an integer $k$, the algorithm enumerates $k$ distinct matchings $\mathcal M = \{M_1, \ldots, M_k\}$ of $G$ such that every matching in $\mathcal M$ has cardinality (or weight) not smaller than that not in $\mathcal M$.
In the 1960s, Murty developed a $k$-best enumeration algorithm based on a simple binary partition technique~\cite{Murty:Letter:1968}. 
Lawler~\cite{Lawler1972} generalized Murty's algorithm to many combinatorial problems, and then $k$-best enumeration algorithms for other problems have been discussed in various fields (see \cite{DBLP:reference/algo/Eppstein16} for a survey).
Chegireddy and Hamacher~\cite{DBLP:journals/dam/ChegireddyH87} developed an $\order{kn^3}$-time $k$-best enumeration algorithm for weighted perfect matchings in general graphs.

In this paper, we focus on enumerating matchings satisfying both maximality and cardinality conditions.
More specifically, we address the following problems.

\begin{definition}
    Given a graph $G = (V, E)$ and a non-negative integer $t$, {\sc Large Maximal Matching Enumeration} asks to enumerate all maximal matchings of $G$ with cardinality at least $t$.
\end{definition}
\begin{definition}
    Given a graph $G = (V, E)$ and a non-negative integer $k$, {\sc $k$-Best Maximal Matching Enumeration} asks to compute a set $\mathcal M$ of $k$ maximal matchings of $G$ such that the cardinality of any maximal matching in $\mathcal M$ is not smaller than that not in $\mathcal M$.
\end{definition}

These kind of problems are recently focused in several work~\cite{Kobayashi:Efficient:2020, DBLP:journals/corr/abs-2012-09153}, where they considered the problems of enumerating minimal solutions with weight or cardinality constraints.
We would like to mention that satisfying both weight or cardinality and maximal/minimal constraints makes enumeration problems even more difficult: Korhonen \cite{DBLP:journals/corr/abs-2012-09153} showed that the problem of enumerating minimal separators of cardinality at most $k$ is solvable in incremental polynomial time or FPT delay, while the problem of enumerating minimal separators without cardinality constraint $k$ is solvable in polynomial delay~\cite{Takata::2010}.

The results of our paper are as follows.
We observe that a straightforward application of the binary partition technique~\cite{Birmele:2013,Fukuda:1992} would not yield polynomial-delay algorithms for {\sc Large Maximal Matching Enumeration}: This technique is typically based on the \emph{extension problem}, which will be defined in \Cref{sec:hard}, and we prove that this problem for {\sc Large Maximal Matching Enumeration} is \NP-hard even if $t = 0$.
This result is independently shown by Casel et al.\cite{DBLP:conf/fct/CaselFGMS19}. See Theorem~10 in \cite{DBLP:conf/fct/CaselFGMS19}.
As algorithmic results, we present $\order{nm}$-delay enumeration algorithms for \textsc{Large Maximal Matching Enumeration} and \textsc{$k$-Best Maximal Matching Enumeration}. 
These algorithms run in exponential space. 
Note that for {\sc $k$-Best Maximal Matching Enumeration}, our algorithm requires $\Omega(k)$ space, while this is indeed exponential when $k$ is exponential in $n$.
We also present an $O(n^2\Delta^2)$-delay and polynomial-space enumeration algorithm for {\sc Large Maximal Matching Enumeration}.

Whereas our algorithms are slower than the known algorithm of Uno~\cite{Uno:NIIjournal}, our algorithms only enumerate matchings that are maximal and have cardinality at least $t$, which would be more efficient with respect to the overall performance when the number of ``large'' maximal matchings are sufficiently smaller than that of all maximal matchings. See \Cref{fig:many_small_matchings} for such an example.

\begin{figure}[t]
    \centering
    \begin{tikzpicture}[scale=0.5]
        \pgfdeclarelayer{nodelayer}
\pgfdeclarelayer{edgelayer}
\pgfsetlayers{nodelayer,edgelayer}

\tikzstyle{matching vertex} = [circle,fill=black,inner sep=0pt,minimum size=4pt]
\tikzstyle{dots} = [circle,fill=black,inner sep=0pt,minimum size=2pt]
\tikzstyle{matching edge} = [black]

\begin{pgfonlayer}{nodelayer}
	\node (0) at (1.5, 3.5) {};
	\node (1) at (1.5, 2) {};
	\node (2) at (7, 2) {};
	\node (3) at (7, 3.5) {};
	
	\node [style = matching vertex] (4) at (2.5, 2.75) {};
	\node [style = matching vertex] (5) at (2.5, 0.5) {};
	\node [style = matching vertex] (6) at (3.5, 2.75) {};
	\node [style = matching vertex] (7) at (3.5, 0.5) {};
	\node [style = matching vertex] (10) at (6, 2.75) {};
	\node [style = matching vertex] (11) at (6, 0.5) {};
	
	\node [style = dots] (d1) at (4.25, 1.5) {};
	\node [style = dots] (d2) at (4.75, 1.5) {};
	\node [style = dots] (d3) at (5.25, 1.5) {};
\end{pgfonlayer}

\begin{pgfonlayer}{edgelayer}
	\draw[rounded corners=7pt] (0.center) -- (3.center) -- (2.center) -- (1.center) -- cycle;
	\draw[style = matching edge] (4) to (5);
	\draw[style = matching edge] (7) to (6);
	\draw[style = matching edge] (10) to (11);
	
    \draw[thick, decoration={brace,mirror,raise=5pt},decorate]
      (5.south) -- node[below=6pt] {$2n$} (11.south);
\end{pgfonlayer}
    \end{tikzpicture}
    \caption{The graph is obtained from the complete graph $K_{2n}$ (depicted as the rounded rectangle) with $2n$ vertices by adding a pendant vertex to each vertex. Since $K_{2n}$ has $\frac{(2n)!}{2^n\cdot n!}$ perfect matchings, the graph contains exactly one maximal matchings of cardinality $2n$ and at least $\frac{(2n)!}{2^n\cdot n!}$ maximal matchings.}
    \label{fig:many_small_matchings}
    
\end{figure}
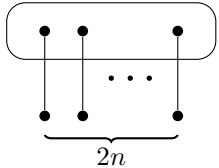


Our algorithms are based on the supergraph technique, which is frequently used in designing enumeration algorithms~\cite{Cohen::2008,Conte::2019,DBLP:conf/mfcs/KuritaK20,Schwikowski::2002,Khachiyan::2008,Khachiyan::2006:matroid}.
In this technique, we define a directed graph on the set of all solutions and the enumeration algorithm simply traverses this directed graph from an arbitrary solution.
To enumerate all solutions, we need to carefully design this directed graph so that all the nodes can be traversed from an arbitrary solution.
We basically follow the technique due to Cohen et al.~\cite{Cohen::2008}, which allows to define a suitable directed graph for enumerating maximal matchings.
We carefully analyze this directed graph and prove that this directed graph has a ``monotone'' path from an arbitrary maximum matching to any maximal matching of $G$, where we mean by a monotone path a sequence of maximal matchings $(M_1, M_2, \ldots, M_k)$ with $|M_1| \ge |M_2| \ge \cdots \ge |M_k|$.
This also enables us to enumerate all maximal matchings in a non-decreasing order of its cardinality.
Let us note that our approach is different from those in the maximal matching enumeration~\cite{Uno:NIIjournal} and the $k$-best enumeration for matchings~\cite{Murty:Letter:1968,DBLP:journals/dam/ChegireddyH87}.
Our polynomial-space enumeration algorithm also exploits this monotone path.

\section{Preliminaries}\label{sec:prelim}
Let $G = (V, E)$ be a graph.
Let $n = \size{V}$ and $m = \size{E}$.
Throughout this paper, we assume that $G$ has no self-loops and parallel edges.
We also assume that $G$ has no isolated vertices and hence we have $n = O(m)$. 
The vertex set and edge set of $G$ are denoted by $V(G)$ and $E(G)$, respectively.
For a vertex $v \in V$, the set of edges incident to $v$ is denoted by $\Gamma(v)$.
To simplify the notation, we also use $\Gamma(e)$ to denote $(\Gamma(u) \cup \Gamma(v)) \setminus \{e\}$ for each edge $e = \{u, v\} \in E$.
A sequence of vertices $P = (u_1, u_2, \ldots, u_k)$ is called a \emph{path} 
if $u_i$ is adjacent to $u_{i+1}$ for any $1 \le i < k$ and all the vertices are distinct.
A sequence of vertices $C = (u_1, u_2, \ldots, u_k)$ is called a \emph{cycle} if if $u_i$ is adjacent to $u_{i+1}$ for any $1 \le i \le k$, where $u_{k + 1}$ is considered as $u_1$, and all the vertices except for pair $\{u_1, u_{k+1}\}$ are distinct.
For $F \subseteq E$, we denote by $G[F]$ the subgraph consisting of all end vertices of $F$ and edges in $F$.
For two sets $X$ and $Y$, we denote by $X \triangle Y$ the symmetric difference  between $X$ and $Y$ (i.e., $X \triangle Y = (X \setminus Y) \cup (Y \setminus X)$).

Let $M$ be a set of edges in $G$.
We say that $M$ is a \emph{matching} of $G$ if for any pair of distinct $e, f \in M$, they does not share their end vertices (i.e., $e \cap f = \emptyset$ holds).
Moreover, $M$ is a \emph{maximal matching} of $G$ if $M$ is a matching and $M \cup  \set{e}$ is not a matching of $G$ for every $e \in E \setminus M$.
The maximum cardinality of a matching of $G$ is denoted by $\nu(G)$. 
Every matching with cardinality $\nu(G)$ is called a \emph{maximum matching} of $G$.
For a matching $M$, we say that a vertex $v$ is \emph{matched} in $M$ if $M$ has an edge incident to $v$.
Otherwise, $v$ is \emph{unmatched} in $M$.

In this paper, we measure the running time of enumeration algorithms in an \emph{output-sensitive manner}~\cite{Johnson:Yannakakis:Papadimitriou:IPL:1988}.
In particular, we focus on the delay of enumeration algorithms: The \emph{delay} of a enumeration algorithm is the maximum time interval between two consecutive outputs (including both preprocessing time and postprocessing time). 

\section{Hardness of the extension problem}\label{sec:hard}

We show that a direct application of the binary partition technique or the $k$-best enumeration framework to {\sc Large Maximal Matching Enumeration} seems to be impossible.

In enumeration algorithms based on the binary partition technique~\cite{Birmele:2013,Fukuda:1992}, we solve a certain decision or optimization problem, called an {\em extension problem}, to enumerate solutions.
Basically, the extension problem asks to decide whether, given disjoint sets $I$ and $O$, there is a solution that includes all elements in $I$ and excludes every element in $O$.
For enumerating \emph{maximum} matchings, the extension problem is tractable: For $I, O \subseteq E$ with $I \cap O = \emptyset$, the extension problem simply asks for a matching $M$ in a graph obtained by removing all endpoints in $I$ and edges in $O$ from $G$ with $|M| = \nu(G) - |I|$.
However, for {\sc Large Maximal Matching Enumeration}, the extension problem is intractable.
The formal definition of the extension problem is as follows:
Given a graph $G = (V, E)$ and $I, O \subseteq E$ with $I \cap O = \emptyset$, {\sc Maximal Matching Extension} asks to determine whether $G$ has a maximal matching $M$ with $I \subseteq M$ and $M \cap O = \emptyset$.

\begin{theorem}\label{theo:hard}
    \textsc{Maximal Matching Extension} is \NP-complete even on planar bipartite graphs with maximum degree three.
\end{theorem}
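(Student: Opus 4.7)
Membership in \NP\ is immediate: given a candidate edge set $M$, we verify in polynomial time that $M$ is a matching, $I\subseteq M$, $M\cap O=\emptyset$, and that every edge in $E\setminus M$ is incident to some vertex of $V(M)$.

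For \NP-hardness, my plan is to reduce from a restricted variant of $3$-SAT, namely Planar $3$-SAT in which every variable occurs at most three times, which is well known to be \NP-hard. Given such a formula $\phi$, I would build a planar bipartite graph $G$ of maximum degree three together with sets $I,O\subseteq E(G)$ so that $\phi$ is satisfiable if and only if $G$ admits a maximal matching $M$ with $I\subseteq M$ and $M\cap O=\emptyset$. The overall layout of $G$ would mirror a planar embedding of the variable-clause incidence graph of $\phi$, with a variable gadget in place of each variable, a clause gadget in place of each clause, and ``wires'' routed along the incidences.

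Each variable gadget would be a small bipartite subgraph (for example, an even cycle with a few edges placed in $I$ and a few edges placed in $O$) engineered so that, on the gadget alone, the forced edges of $I$ have exactly two maximal extensions, one representing $\mathrm{true}$ and one representing $\mathrm{false}$. The gadget exposes one interface vertex per literal occurrence, and whether that interface is matched internally or must be matched externally encodes the outgoing truth value. Each clause gadget is a small bipartite subgraph with three interface vertices, one per literal, designed so that it admits a maximal-matching completion precisely when at least one of its interfaces arrives in the ``true'' state, i.e.\ so that at least one literal is free to cover a central vertex of the gadget. Wires between variable and clause gadgets are realised as paths whose length and endpoint incidences are chosen so that, once $I$ fixes the initial edge, the matching is forced to alternate uniquely along the wire, and hence transmits a single bit faithfully.

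The two obstacles I anticipate are (i) achieving planarity, bipartiteness, and the maximum degree bound simultaneously while keeping the gadgets small, and (ii) showing that no unintended maximal matching exists after the gadgets are glued together. Planarity follows from the planarity of $\phi$ together with locally planar embeddings of each gadget; bipartiteness is secured by ensuring that every cycle of $G$ is even, which requires choosing the variable-cycle lengths and the wire parities consistently with the $+/-$ polarity of each literal occurrence; and the degree bound is maintained by pushing any branching into small subgadgets rather than concentrating edges at a single vertex. The most delicate step will be the completeness direction, in which a parity argument along each wire, together with the forced edges in $I$ and the forbidden edges in $O$, is used to rule out ``mixed'' maximal matchings that do not correspond to any consistent truth assignment of $\phi$; carefully chosen forbidden edges in $O$ will play the key role of eliminating these spurious completions.
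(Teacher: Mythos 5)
Your membership argument is fine, and the general strategy (reduction from a planar, occurrence-bounded variant of \textsc{3-SAT} via variable gadgets, clause gadgets, and a planarity/degree-preserving layout) is the same one the paper uses. But as written this is a plan, not a proof: every load-bearing component is left as a placeholder. You never exhibit a concrete variable gadget (``for example, an even cycle with a few edges placed in $I$ and a few edges placed in $O$''), a concrete clause gadget, or concrete wire lengths, and you never prove the two directions of the equivalence; indeed you yourself flag the completeness direction and the exclusion of ``unintended maximal matchings'' as anticipated obstacles rather than resolving them. The difficulty you are deferring is precisely the heart of the theorem, because maximality is not a local property: whether a partial matching on a gadget extends to a \emph{maximal} matching avoiding $O$ depends on which neighboring vertices end up matched, so the claim that a wire ``transmits a single bit faithfully'' or that a clause gadget ``admits a maximal-matching completion precisely when at least one interface is true'' cannot be accepted without an explicit construction and a case analysis. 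Until those gadgets are written down and verified, the reduction does not exist.

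For comparison, the paper's construction shows that none of this machinery is needed: take $I=\emptyset$, make each variable gadget the path $(x_i, x'_i, \neg x_i)$, each clause gadget the path $(c_{j,1}, d_j, c_{j,2}, d'_j, c_{j,3})$, join literal vertices to the corresponding clause vertices, and let $O$ consist exactly of these connecting edges. Forbidding the cross edges means a maximal matching must pick exactly one edge in each variable path (encoding the assignment) and leaves exactly one vertex $c_{j,k}$ of each clause path unmatched; maximality then forces the variable endpoint of the forbidden edge at $c_{j,k}$ to be matched, i.e.\ that literal is true. Planarity, bipartiteness, and maximum degree three are immediate because the gadgets are paths substituted into the planar incidence graph, with each variable occurring at most three times. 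If you want to salvage your write-up, the cleanest route is to abandon the wires and the $I$-forced even-cycle gadgets and prove the equivalence for gadgets of this simple path form, where the ``exactly one unmatched vertex per clause path'' argument does all the work.
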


\begin{proof}
    Clearly \textsc{Maximal Matching Extension} is in \NP.
    To prove the \NP-hardness, we perform a polynomial-time reduction from a variant of {\sc Planar $3$-SAT}. 
    Let $\phi$ be a 3-CNF formula with variable set $X = \set{x_1, \ldots, x_n}$ and clause set $C = \set{C_1, \ldots, C_m}$.
    For a clause $C_j = (c_{j, 1} \lor c_{j, 2} \lor c_{j, 3})$, we call $c_{j, k}$ \emph{the $k$-th literal of $C_j$}.
    Note that $\phi$ may contain a clause with two literals.  
    The incidence bipartite graph $G_\phi$ of $\phi$ consists of vertex set $V_\phi = X \cup C$ and edge set $E_\phi = \inset{x_i, C_j}{x_i \in C_j}$. 
    It is known that {\sc 3-SAT} is \NP-complete even when $G_\phi$ is planar and each variable occurs twice positively and once negatively in $\phi$~\cite{Fellows:Kratochvil:Middendorf:Pfeiffer:Algorithmica:1995}. 
    
    \begin{figure}[t]
        \centering
        \begin{tikzpicture}[scale=0.75]
            \pgfdeclarelayer{nodelayer}
\pgfdeclarelayer{edgelayer}
\pgfsetlayers{nodelayer,edgelayer}
\tikzstyle{variable vertex}=[fill=none, draw=black, shape=circle, minimum size=2em, inner sep=0pt, outer sep=0pt]
\tikzstyle{clause vertex}  =[fill=none, draw=black, shape=circle, minimum size=2em, inner sep=0pt, outer sep=0pt]
\tikzstyle{forbidden edge}=[dashed]

\tikzstyle{assignment edge} = []

\begin{pgfonlayer}{nodelayer}
    \node [style=variable vertex] (x11) at (0, 1.5) {$x_1$};
    \node [style=variable vertex] (x12) at (1, 2.5) {$x'_1$};
    \node [style=variable vertex] (x13) at (2, 1.5) {$\neg x_1$};
    
    \node [style=variable vertex] (x21) at (4, 1.5) {$x_2$};
    \node [style=variable vertex] (x22) at (5, 2.5) {$x'_2$};
    \node [style=variable vertex] (x23) at (6, 1.5) {$\neg x_2$};
    
    \node [style=variable vertex] (x31) at (8,  1.5) {$x_3$};
    \node [style=variable vertex] (x32) at (9,  2.5) {$x'_3$};
    \node [style=variable vertex] (x33) at (10, 1.5) {$\neg x_3$};
    
    \node [style=variable vertex] (x41) at (12, 1.5) {$x_4$};
    \node [style=variable vertex] (x42) at (13, 2.5) {$x'_4$};
    \node [style=variable vertex] (x43) at (14, 1.5) {$\neg x_4$};
    
    \node [style=clause vertex] (c11) at (0,   -1) {$c_{1,1}$};
    \node [style=clause vertex] (c12) at (2,   -1) {$c_{1,2}$};
    \node [style=clause vertex] (c13) at (4,   -1) {$c_{1,3}$};
    \node [style=clause vertex] (d11) at (1, -2) {$d_1$};
    \node [style=clause vertex] (d12) at (3, -2) {$d'_1$};
    
    \node [style=clause vertex] (c21) at (6,   -1) {$c_{2,1}$};
    \node [style=clause vertex] (c22) at (8,   -1) {$c_{2,2}$};
    \node [style=clause vertex] (c23) at (10,   -1) {$c_{2,3}$};
    \node [style=clause vertex] (d21) at (7, -2) {$d_2$};
    \node [style=clause vertex] (d22) at (9, -2) {$d'_2$};
    
    \node [style=clause vertex] (c31) at (12,   -1) {$c_{3,1}$};
    \node [style=clause vertex] (c32) at (14,   -1) {$c_{3,2}$};
    \node [style=clause vertex] (d31) at (13, -2) {$d_3$};

\end{pgfonlayer}
\begin{pgfonlayer}{edgelayer}

		\draw (x11) to (x12);
		\draw (x12) to (x13);
		\draw (x21) to (x22);
		\draw (x22) to (x23);
		\draw (x31) to (x32);
		\draw (x32) to (x33);
		\draw (x41) to (x42);
		\draw (x42) to (x43);
		
		\draw (c11) to (d11); 
		\draw (d11) to (c12); 
		\draw (c12) to (d12); 
		\draw (d12) to (c13); 
		
		\draw (c21) to (d21); 
		\draw (d21) to (c22); 
		\draw (c22) to (d22); 
		\draw (d22) to (c23); 
		
		\draw (c31) to (d31); 
		\draw (d31) to (c32); 
		
		\draw[style=forbidden edge] (c11) to (x11); 
		\draw[style=forbidden edge] (c12) to (x23); 
		\draw[style=forbidden edge] (c13) to (x31); 
		\draw[style=forbidden edge] (c21) to (x13); 
		\draw[style=forbidden edge] (c22) to (x21); 
		\draw[style=forbidden edge] (c23) to (x41); 
		\draw[style=forbidden edge] (c31) to (x21); 
		\draw[style=forbidden edge] (c32) to (x43); 
		
        \draw[assignment edge] (x13) to (x12); 
        \draw[assignment edge] (x23) to (x22); 
        \draw[assignment edge] (x31) to (x32); 
        \draw[assignment edge] (x43) to (x42); 
        
		\draw[assignment edge] (c11) to (d11); 
		\draw[assignment edge] (c13) to (d12); 
		\draw[assignment edge] (d21) to (c22); 
		\draw[assignment edge] (d22) to (c23); 
		\draw[assignment edge] (c31) to (d31); 
\end{pgfonlayer}
        \end{tikzpicture}
        \caption{An example of $G'_\phi$ constructed from a 3-CNF formula $ \phi = (x_1\lor \lnot x_2 \lor x_3) \land (\lnot x_1 \lor x_2 \lor x_4) \land(x_2 \lor \lnot x_4)$.
        Edges in $O_\phi$ are depicted as dotted edges and $I_\phi$ is empty.
        Blue thick edges represent a maximal matching $M$ of $G'_\phi$ with $O_\phi \cap M = \emptyset$ corresponding to a truth assignment $(x_1, x_2, x_3, x_4) = (\texttt{False}, \texttt{False}, \texttt{True}, \texttt{False})$ of $\phi$. }
        \label{fig:hardness}
    \end{figure}
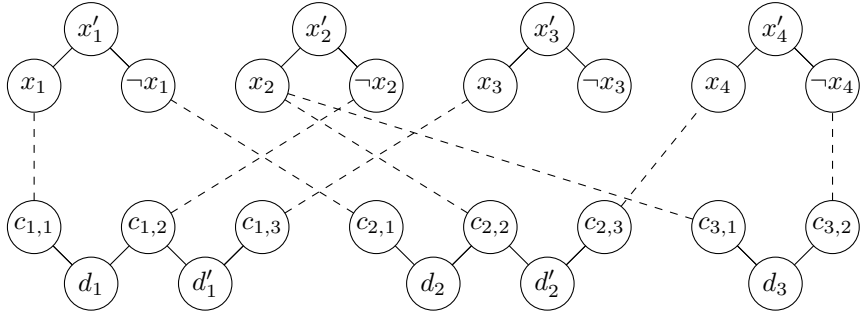
    
    Given an instance $\phi$ of the variant of {\sc 3-SAT}, we construct an instance $(G'_\phi, I_\phi, O_\phi)$ of \textsc{Maximal Matching Extension} so that 
    $\phi$ is satisfiable if and only if $(G'_\phi, I_\phi, O_\phi)$ is a yes-instance. 
    For each variable $x_i$, the variable gadget is a path $P_i = (x_i, x'_i, \neg x_i)$ of length two.
    For each clause $C_j$, 
    the clause gadget is a path $Q_j = (c_{j,1}, d_j, c_{j,2}, d'_j, c_{j,3})$ of length four if $C_j$ contains exactly three literals.
    Otherwise, the clause gadget is a path $Q_j = (c_{j,1}, d_j, c_{j,2})$ of length two.  
    $G'_\phi$ is obtained from these gadgets by adding an edge $\set{x_i, c_{j,k}}$ if $x_i$ is the $k$-th literal of $C_j$, and $\set{\neg x_i, c_{j,k}}$ if $\neg x_i$ is the $k$-th literal of $C_j$ for $1 \le i \le n$, $1 \le j \le m$, and $1 \le k \le 3$. See \Cref{fig:hardness} for an example. 
    Note that the maximum degree of $G'_\phi$ is three. 
    Since $G'_\phi$ is obtained from $G_\phi$ by replacing each vertex $x_i$ with $P_i$ and each vertex $C_i$ with $Q_i$, $G'_\phi$ is planar. 
    Moreover, since $G_\phi$ is bipartite, $G'_\phi$ is bipartite. 
    To complete the construction of the corresponding instance of \textsc{Maximal Matching Extension}, 
    let $I_\phi = \emptyset$ and let $O_\phi$ be the set of edges in $G'_\phi$ that lie between variable gadgets and clause gadgets.

    Suppose that $\phi$ has a satisfying truth assignment.
    From this assignment, we construct a matching $M$ of $G'_\phi$ as follows.
    For each variable $x_i$ that is assigned to true (resp. false),
    we add an edge $\set{x_i, x'_i}$ (resp. $\set{\neg x_i, x'_i}$) to $M$.
    For each clause $C_j$, at least one of its literals, say $c_{j, k}$, is true.
    Then, we add the unique maximum matching $M'$ in $Q_j$ to $M$ that has no edges incident to $c_{j, k}$.
    Note that any vertex $c_{j, k'}$ with $c_{j, k'} \neq c_{j, k}$ in $Q_j$ is matched in $M'$. 
    This completes the construction of $M$, and it is easy to verify that $M$ is a matching of $G'_\phi$ with $M \cap O_\phi = \emptyset$.
    Moreover, for any edge $e \in O_\phi$, at least one of endpoints of $e$ is matched in $M$. 
    Thus, if $\phi$ has a satisfying truth assignment, then $G'_\phi$ has a maximal matching that contains $M$ but none of any edge in $O_\phi$.
    
    Conversely, suppose that $M$ is a maximal matching of $G'_\phi$ with $M \cap O_\phi = \emptyset$.
    As $M \cap O_\phi = \emptyset$ and the maximality of $M$, $M$ contains exactly one edge in each variable gadget. 
    We construct an assignment of variables in such a way that for each variable $x_i$, $x_i$ is assigned to true if $\set{x_i, x'_i} \in M$ and to false otherwise. 
    Then, we show that this assignment is a satisfying truth assignment for $\phi$.
    To see this, consider a clause $C_j$.
    From the maximality of $M$, $Q_j$ contains exactly one vertex $c_{j, k}$ unmatched in $M$. 
    Suppose for contradiction that each literal of $C_j$ is false under the above assignment.
    Since $c_{j, k}$ is false, the corresponding variable vertex, say $x_i$, is unmatched in $M$.
    Thus, since both $c_{j, k}$ and $x_i$ are unmatched in $M$, $M \cup \set{\set{x_i, c_{j, k}}}$ or $M \cup \set{\set{\neg x_i, c_{j, k}}}$ is a matching of $G'_\phi$, contradicting to the maximality of $M$.
\end{proof}

As for $k$-best enumeration algorithms based on Lawler's framework~\cite{Lawler1972}, we need to solve an optimization version of {\sc Maximal Matching Extension}.
The above theorem also rules out the applicability of Lawler's framework to obtain a polynomial-delay algorithm for {\sc Large Maximal Matching Enumeration}, assuming that \P $\neq$ \NP.


\section{Enumeration of maximal matchings}\label{sec:main}

\subsection{{\sc Large Maximal Matching Enumeration}}
Our enumeration algorithm is based on the \emph{supergraph technique}, which is frequently used in many enumeration algorithms~\cite{Cohen::2008,Conte::2019,DBLP:conf/mfcs/KuritaK20,Schwikowski::2002,Khachiyan::2008,Khachiyan::2006:matroid}.
In particular, our algorithm is highly related to the enumeration algorithm for maximal independent sets with the {\em input-restricted problem} due to \cite{Cohen::2008}.
The basic idea of the supergraph technique is quite simple.
We define a directed graph $\mathcal G$ whose node set corresponds to all the solutions we wish to enumerate.
The enumeration algorithm solely traverses this directed graph and outputs a solution at each node.
To this end, we need to carefully design the arc set of $\mathcal G$ so that all the nodes in $\mathcal G$ are reachable from a specific node.

For maximal matchings (without cardinality constraints), we can enumerate those in polynomial delay with this technique.
Let $G = (V, E)$ be a graph.
For a (not necessarily maximal) matching $M$ of $G$, we denote by $\comp{M}$ an arbitrary maximal matching of $G$ that contains $M$.
This maximal matching can be computed from $M$ by greedily adding edges in $E \setminus M$.
Let $M$ be a maximal matching of $G$.
For $e \in E \setminus M$, $(M \setminus \Gamma(e)) \cup \set{e}$ is a matching of $G$, and then $M_e = \comp{(M \setminus \Gamma(e)) \cup \set{e}}$ is a maximal matching of $G$.
We define the (out-)neighbors of a maximal matching $M$ in $\mathcal G$, denoted $\mathcal N_{\mathcal G}(M)$, as the set of maximal matchings $\set{M_e : e \in E \setminus M}$.
To avoid a confusion, each maximal matching in $\mathcal N_{\mathcal G}(M)$ is called a {\em $\mathcal G$-neighbor} of $M$.
The arc set of $\mathcal G$ is defined by this neighborhood relation.
With this definition, we can show that every maximal matching $M_2$ is reachable from any other maximal matching $M_1$, that is, $\mathcal G$ is strongly connected.
To see this, we consider the value $m(M, M') = |M \cap M'|$ defined between two (maximal) matchings $M$ and $M'$ of $G$.
Since $M_1$ and $M_2$ are maximal matchings of $G$, there is an edge $e \in M_2 \setminus M_1$.
Then, we have
\begin{linenomath}
\[
    m(M_1, M_2) = \size{M_1 \cap M_2} < \size{((M_1 \setminus \Gamma(e)) \cup \set{e}) \cap M_2}  \le \size{\comp{(M_1 \setminus \Gamma(e)) \cup \set{e}} \cap M_2},
\]
\end{linenomath}
where the first inequality follows from $e \in M_2$ and $\Gamma(e) \cap M_2 = \emptyset$ and the second inequality follows from $M \subseteq \comp{M}$ for any matching $M$ of $G$.
This indicates that $M_1$ has a $\mathcal G$-neighbor $M' = \comp{(M_1 \setminus \Gamma(e)) \cup \{e\}}$ such that $m(M_1, M_2) < m(M', M_2)$.
Moreover, the following proposition holds.

\begin{proposition}\label{prop:identical}
    Let $M_1$ and $M_2$ be maximal matchings of $G$.
    Then, $m(M_1, M_2) \le \size{M_2}$.
    Moreover, $m(M_1, M_2) = \size{M_2}$ if and only if $M_1 = M_2$.
\end{proposition}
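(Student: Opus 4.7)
The plan is to dispatch each part of the proposition with a short set-theoretic argument, using the maximality hypothesis only once. The inequality $m(M_1, M_2) \le |M_2|$ is immediate from the definition $m(M_1, M_2) = |M_1 \cap M_2|$, since $M_1 \cap M_2 \subseteq M_2$ forces $|M_1 \cap M_2| \le |M_2|$. Similarly, the ``if'' direction of the characterization is trivial: if $M_1 = M_2$, then $M_1 \cap M_2 = M_2$, so $m(M_1, M_2) = |M_2|$.

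The only content is the ``only if'' direction. I would assume $m(M_1, M_2) = |M_2|$ and first deduce $M_2 \subseteq M_1$: combining $|M_1 \cap M_2| = |M_2|$ with $M_1 \cap M_2 \subseteq M_2$ (both finite) yields $M_1 \cap M_2 = M_2$, i.e., every edge of $M_2$ lies in $M_1$. Then I would invoke the maximality of $M_2$ by way of contradiction. Suppose $M_2 \subsetneq M_1$ and pick any $e \in M_1 \setminus M_2$. Since $M_1$ is a matching and $M_2 \cup \{e\} \subseteq M_1$, the set $M_2 \cup \{e\}$ is itself a matching of $G$ strictly containing $M_2$, which contradicts the assumption that $M_2$ is a maximal matching. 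Hence $M_1 = M_2$.

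I do not expect any genuine obstacle here: the argument is essentially the standard observation that two maximal matchings cannot be properly nested, applied to the equality case of the trivial bound. The only subtlety worth pointing out in the write-up is that $M_2 \subseteq M_1$ combined with the fact that $M_1$ is a matching is what makes $M_2 \cup \{e\}$ automatically a matching, so no further use of the structure of $G$ is needed.
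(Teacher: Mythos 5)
Your argument is correct: the bound is immediate from $M_1 \cap M_2 \subseteq M_2$, and the equality case reduces to the observation that $M_2 \subseteq M_1$ would let any edge of $M_1 \setminus M_2$ extend $M_2$ (as a subset of the matching $M_1$), contradicting maximality. The paper omits a proof of this proposition entirely, treating it as immediate, and your write-up is exactly the standard argument it implicitly relies on, so there is nothing to add.
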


By induction on $k = \size{M_2} - m(M_1, M_2)$, there is a directed path from $M_1$ to $M_2$ in $\mathcal G$ for every pair of maximal matchings of $G$, which proves the strong connectivity of $\mathcal G$.

Our algorithm for {\sc Large Maximal Matching Enumeration} also traverses $\mathcal G$ from an arbitrary {\em maximum} matching of $G$ in a breadth-first manner but truncates all maximal matchings of cardinality less than given threshold $t$.
The pseudocode is shown in \Cref{algo:sg}.
In the following, we show that \Cref{algo:sg} enumerates all the maximal matchings of $G$ with cardinality at least $t$, provided $t < \nu(G)$.
We discuss later for the other case $t = \nu(G)$.

\DontPrintSemicolon
\begin{algorithm}[t]
    \caption{Given a graph $G$ and an integer $t$, the algorithm enumerates all maximal matchings of $G$ with cardinality at least $t$.}
    \label{algo:sg}
    \Procedure{\Traverse{$G, t$}}{
        Let $M^*$ be a maximum matching of $G$\;
        Add $M^*$ to a queue $\mathcal Q$ and to set $\mathcal S$\;
        \While{$\mathcal Q$ is not empty}{
            Let $M$ be a maximal matching in $\mathcal Q$\;
            Output $M$ and delete $M$ from $\mathcal Q$\;
            \ForEach{$M' \in \mathcal N_{\mathcal G}(M)$}{
                \lIf{$M' \not\in \mathcal S$ and $\size{M'} \ge t$}{
                    Add $M'$ to $\mathcal Q$ and to $\mathcal S$
                }
            }
        }
    }
\end{algorithm}

For a non-negative integer $k$, we say that a directed path in $\mathcal G$ is {\em $k$-thick} if every maximal matching on the path has cardinality at least $k$.
To show the correctness of \Cref{algo:sg}, it is sufficient to prove that 
(1) for any pair of maximum matching $M^*$ of $G$ and a maximal matching $M$ of $G$ with $\size{M} < \nu(G)$, there is a directed $\size{M}$-thick path from $M^*$ to $M$ in $\mathcal G$ and 
(2) for a pair of maximum matchings $M$ and $M'$ of $G$, there is a directed $(\nu(G) - 1)$-thick path from $M$ to $M'$ in $\mathcal G$. 

In the rest of this subsection, fix distinct maximal matchings $M_1$ and $M_2$ of $G$.
Since $M_1$ and $M_2$ are matchings of $G$, each component of the graph $G[M_1 \triangle M_2]$ is either a path or a cycle.
We say that a path component $P$ in $G[M_1 \triangle M_2]$ is \emph{even-alternating} if exactly one end vertices of $P$ is unmatched in $M_1$.
Let us note that $P$ is even-alternating if and only if it has an even number of edges.
We say $P$ is \emph{$M_1$-augmenting} (resp. $M_2$-augmenting) if the both end vertices of $P$ are unmatched in $M_1$ (resp. $M_2$).
Since both $M_1$ and $M_2$ are maximal matchings of $G$, the following proposition holds.

\begin{proposition}\label{prop:has-two-edges}
    Every component in $G[M_1 \triangle M_2]$ is either a path with at least two edges or a cycle with at least four edges.
\end{proposition}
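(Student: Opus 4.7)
The plan is to prove the proposition in two stages: first establish the standard structural fact that every component of $G[M_1 \triangle M_2]$ is either a path or a cycle with edges alternating between $M_1 \setminus M_2$ and $M_2 \setminus M_1$, and then rule out the short pathological components using the maximality of $M_1$ and $M_2$ together with the simplicity of $G$.

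For the first stage, I would observe that each vertex $v$ of $G[M_1 \triangle M_2]$ is incident to at most one edge of $M_1$ and at most one edge of $M_2$ (since both are matchings), hence to at most two edges of $M_1 \triangle M_2$. A graph with maximum degree two decomposes into vertex-disjoint simple paths and simple cycles, which gives the component structure. Moreover, two $M_1$-edges cannot share a vertex and likewise for $M_2$, so consecutive edges along such a path or cycle must belong to different sides of $M_1 \triangle M_2$, i.e., the components are alternating.

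For the second stage, I would first rule out short cycles: an alternating cycle has even length, so its length is at least $2$; a cycle of length exactly $2$ would consist of two parallel edges between the same pair of vertices, which is forbidden since $G$ is simple. Hence every cycle component contains at least four edges. Next, I would rule out length-one paths by contradiction. Suppose some component is a single edge $e = \{u,v\}$; without loss of generality $e \in M_1 \setminus M_2$. Since $\{u,v\}$ is the only edge incident to $u$ or $v$ in $G[M_1 \triangle M_2]$, neither $u$ nor $v$ is covered by any edge of $M_2 \setminus M_1$. They also cannot be covered by any edge $f \in M_1 \cap M_2$, because such an $f$ would share an endpoint with $e \in M_1$, violating the fact that $M_1$ is a matching. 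Therefore both $u$ and $v$ are unmatched in $M_2$, and $M_2 \cup \{e\}$ is a matching, contradicting the maximality of $M_2$. Thus every path component has at least two edges.

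I do not anticipate a substantive obstacle here; the argument is short and essentially a careful accounting of which vertices can be matched in which of $M_1,M_2$. The only mildly subtle point, which I will make sure to state explicitly, is the exclusion of the case where an endpoint of a length-one path might still be covered by an edge of $M_1 \cap M_2$, since such edges are absent from $G[M_1 \triangle M_2]$; this is handled precisely by invoking the matching property of $M_1$.
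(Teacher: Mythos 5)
Your proof is correct and follows exactly the standard argument the paper leaves implicit (it states \Cref{prop:has-two-edges} without proof, as a consequence of maximality): degree-at-most-two alternation gives paths and cycles, simplicity of $G$ excludes $2$-cycles, and maximality of $M_2$ (resp.\ $M_1$) excludes single-edge path components. No gaps; the explicit handling of possible $M_1 \cap M_2$ edges at the endpoints is the right point to make precise.
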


For $M_2$-augmenting path component $P$ in $G[M_1 \triangle M_2]$, it holds that $\size{M_1 \cap E(P)} > \size{M_2 \cap E(P)}$.
Thus, the following proposition holds.
\begin{proposition}
    \label{prop:M1:is:larger:than:M2}
    Suppose that $G[M_1 \triangle M_2]$ has no $M_1$-augmenting path components but has at least one $M_2$-augmenting path component. 
    Then, $\size{M_1} > \size{M_2}$. 
\end{proposition}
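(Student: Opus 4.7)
My plan is to analyze the contribution of each connected component of $G[M_1 \triangle M_2]$ to the difference $|M_1| - |M_2|$ and then sum these contributions. Since $|M_1| - |M_2| = |M_1 \setminus M_2| - |M_2 \setminus M_1|$, and every edge of $M_1 \triangle M_2$ lies in exactly one component, it suffices to bound $|M_1 \cap E(H)| - |M_2 \cap E(H)|$ for every component $H$.

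First I would use the fact that in any component $H$ of $G[M_1 \triangle M_2]$ every internal vertex has exactly two incident edges in $M_1 \triangle M_2$, one from $M_1 \setminus M_2$ and one from $M_2 \setminus M_1$ (since $M_1$ and $M_2$ are matchings). This forces edges along $H$ to alternate between $M_1$ and $M_2$, from which I can do the standard case split:
\begin{itemize}
\item If $H$ is a cycle, it has even length and contributes $|M_1 \cap E(H)| - |M_2 \cap E(H)| = 0$.
\item If $H$ is an even-alternating path (odd number of $M_1$-edges equal to the number of $M_2$-edges, i.e., even length), it contributes $0$.
\item If $H$ is an $M_1$-augmenting path, both endpoints are unmatched in $M_1$, so the first and last edges of $H$ lie in $M_2 \setminus M_1$ and $H$ contributes $-1$.
\item If $H$ is an $M_2$-augmenting path, symmetrically, $H$ contributes $+1$.
\end{itemize}

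With this tally I would conclude that
\[
|M_1| - |M_2| \;=\; a_2(M_1,M_2) - a_1(M_1,M_2),
\]
where $a_i(M_1,M_2)$ denotes the number of $M_i$-augmenting path components of $G[M_1 \triangle M_2]$. Under the hypothesis of the proposition, $a_1(M_1,M_2) = 0$ and $a_2(M_1,M_2) \ge 1$, which immediately gives $|M_1| - |M_2| \ge 1$, as desired.

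The argument is essentially routine once the alternating structure is exploited; I do not foresee a significant obstacle. The only point that requires mild care is making sure that every path component indeed falls into one of the three disjoint types listed (augmenting in exactly one of the matchings, or even-alternating). This follows from the fact that the edge of $H$ incident to any endpoint $v$ of $H$ belongs to $M_1 \triangle M_2$, and $v$ has no other edges of $M_1 \cup M_2$ in $H$, so whether $v$ is matched in $M_1$ (resp.\ $M_2$) is determined purely by whether this single edge belongs to $M_1$ (resp.\ $M_2$). Combined with \Cref{prop:has-two-edges}, this justifies the classification and closes the proof.
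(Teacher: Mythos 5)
Your proposal is correct and takes essentially the same route as the paper: the paper's (very terse) argument is precisely the component-wise accounting over $G[M_1 \triangle M_2]$ that you spell out, namely that cycle and even-alternating path components contribute zero to $|M_1|-|M_2|$ while each $M_2$-augmenting path component has strictly more $M_1$-edges than $M_2$-edges. Your explicit identity $|M_1| - |M_2| = a_2(M_1,M_2) - a_1(M_1,M_2)$ is just a more detailed rendering of that same tally, so there is nothing further to add.
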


\begin{lemma}\label{lem:goodpath}
    If $G[M_1 \triangle M_2]$ has an $M_1$-augmenting or even-alternating path component $P$, then there is a directed $\size{M_1}$-thick path from $M_1$ to a maximal matching $M'$ in $\mathcal G$ such that $m(M_1, M_2) < m(M', M_2)$ and $\size{M'} \ge \size{M_1}$.
    Moreover, if $P$ is $M_1$-augmenting, then $\size{M'} > \size{M_1}$.
\end{lemma}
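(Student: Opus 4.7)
The plan is to build the desired path by ``sliding'' along $P$ from an endpoint that is unmatched in $M_1$. Write $P = (v_0, v_1, \ldots, v_k)$ with $f_i = \set{v_{i-1}, v_i}$, choosing $v_0$ unmatched in $M_1$ (such an endpoint exists in both cases). Since $v_0$ carries no $M_1$-edge, $f_1 \in M_2 \setminus M_1$, $f_2 \in M_1 \setminus M_2$, and $\Gamma(f_1) \cap M_1 = \set{f_2}$. The key single $\mathcal G$-arc is $M_1 \to M^{(1)}$, where $M^{(1)} := \comp{(M_1 \setminus \set{f_2}) \cup \set{f_1}}$. Because $\mu$ only inserts edges, $\size{M^{(1)}} \ge \size{M_1}$; because $f_1 \in M_2$ while $f_2 \notin M_2$, a direct count gives $m(M^{(1)}, M_2) \ge m(M_1, M_2) + 1$.

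If $P$ is even-alternating, this single arc already satisfies every requirement and I take $M' := M^{(1)}$. For the $M_1$-augmenting case I argue by induction on $k = \size{E(P)}$. If $\size{M^{(1)}} > \size{M_1}$, I again take $M' := M^{(1)}$. Otherwise $\size{M^{(1)}} = \size{M_1}$, which forces $M^{(1)} = (M_1 \setminus \set{f_2}) \cup \set{f_1}$ to itself be maximal. A direct verification gives $M^{(1)} \triangle M_2 = (M_1 \triangle M_2) \setminus \set{f_1, f_2}$, so the truncated path $P' = (v_2, \ldots, v_k)$ of $k - 2$ edges is a component of $M^{(1)} \triangle M_2$ and is $M^{(1)}$-augmenting (both $v_2$ and $v_k$ are unmatched in $M^{(1)}$ and the alternation is preserved). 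Applying the inductive hypothesis to $(M^{(1)}, M_2, P')$ supplies a $\size{M^{(1)}}$-thick path from $M^{(1)}$ to a maximal $M'$ with $m(M^{(1)}, M_2) < m(M', M_2)$ and $\size{M'} > \size{M^{(1)}}$; prepending the initial arc produces the required $\size{M_1}$-thick path because $\size{M^{(1)}} \ge \size{M_1}$.

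The main subtlety, and where the argument must be handled delicately, is the base case $k = 3$: the naive one-swap step is not \emph{a priori} strictly augmenting. Here I invoke \Cref{prop:has-two-edges}: the inductive reduction would produce a one-edge component $P'$ in $M^{(1)} \triangle M_2$, but since both $M^{(1)}$ and $M_2$ are maximal, every path component must have at least two edges, a contradiction. Hence for $k = 3$ the sub-case $\size{M^{(1)}} = \size{M_1}$ cannot occur, so $\size{M^{(1)}} > \size{M_1}$ holds, anchoring the induction. This interplay between the maximality of $M^{(1)}$ (delivered by $\mu$) and the structural length-two lower bound on components of the symmetric difference of two maximal matchings is precisely what makes the single prefix-swap suffice to make progress.
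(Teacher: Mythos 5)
Your proof is correct and follows essentially the same route as the paper's: a single swap $(M_1\setminus\{e_2\})\cup\{e_1\}$ at the $M_1$-unmatched endpoint completed by $\mu$, then induction on the path length for the $M_1$-augmenting case, truncating the path by two edges when $\hat M$ happens to be maximal. The only cosmetic difference is the base case, where the paper directly notes that $\hat M\cup\{e_3\}$ is a matching while you derive the same contradiction via \Cref{prop:has-two-edges} applied to the pair $(M^{(1)},M_2)$; both are valid.
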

\begin{proof}
    Let $P = (v_1, v_2, \ldots, v_\ell)$ and let $e_i = \{v_i, v_{i+1}\}$ for $1 \le i < \ell$. 
    By \Cref{prop:has-two-edges}, $P$ contains at least two edges.
    Assume, without loss of generality, we have $e_1 \in M_2$ and $e_2 \in M_1$. 
    Define $\hat{M} = (M_1 \setminus \set{e_2}) \cup \set{e_1}$.
    As $v_1$ is unmatched in $M_1$. $\hat{M}$ is a matching of $G$, and hence $M' = \comp{\hat{M}}$ is a $\mathcal G$-neighbor of $M_1$. 
    Then,
    \begin{linenomath}
    \[
        m(M', M_2)  \ge m(\hat{M}, M_2) = \size{((M_1 \setminus \set{e_2}) \cup \set{e_1}) \cap M_2} > \size{M_1 \cap M_2} = m(M_1, M_2),
    \]
    \end{linenomath}
    as $e_1 \in M_2$ and $e_2 \in M_1 \setminus M_2$.
    Moreover, we have $\size{\comp{\hat{M}}} \ge \size{\hat{M}} = \size{M_1}$. 
    Thus, the arc $(M_1, \comp{\hat{M}})$ is the desired directed path in $\mathcal G$. 
    
    Suppose moreover that $P$ is $M_1$-augmenting.
    In this case, we have $\ell \ge 4$.
    We prove the claim by induction on $\ell$.
    Let $\hat{M}$ be as above.
    If $\hat{M} \subset \comp{\hat{M}}$, we are done.
    Suppose otherwise.
    If $\ell = 4$, as $v_4$ is unmatched in $M_1$, $\hat{M} \cup \{e_3\}$ is a matching of $G$, implying that $\hat{M} \subset \comp{\hat{M}}$.
    Otherwise, that is, $\hat{M} = \comp{\hat{M}}$, the subpath $(v_3, v_4, \dots, v_\ell)$ of $P$ is a path component in $G[\hat{M} \triangle M_2]$ and is $\hat{M}$-augmenting.  
    Applying the induction hypothesis to this subpath, there is a directed $\size{M'}$-thick path from $M'$ to a maximal matching $M''$ in $\mathcal G$ such that $m(M', M_2) < m(M'', M_2)$ and $\size{M''} > \size{M'} = \size{M_1}$.
    As $m(M_1, M_2) < m(M', M_2) < m(M'', M_2)$, the lemma follows.
\end{proof}

If $G[M_1 \triangle M_2]$ has neither $M_1$-augmenting path components nor cycle components, the maximal matching $M'$ of $G$ in \Cref{lem:goodpath} satisfies the following additional property.

\begin{corollary}\label{cor:goodpath}
    Let $M'$ be the maximal matching of $G$ obtained in \Cref{lem:goodpath}.
    If $G[M_1 \triangle M_2]$ has neither $M_1$-augmenting path components nor cycle components, then $G[M' \triangle M_2]$ has no cycle components.
\end{corollary}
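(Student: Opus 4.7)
The plan is to analyze $M' \triangle M_2$ by first comparing it to $\hat{M} \triangle M_2$ and then accounting for the ``greedy completion'' set $F = M' \setminus \hat{M}$, where as in the proof of \Cref{lem:goodpath} we have $\hat{M} = (M_1 \setminus \set{e_2}) \cup \set{e_1}$ and $M' = \comp{\hat{M}}$.

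First, I would observe that $\hat{M} \triangle M_2 = (M_1 \triangle M_2) \setminus \set{e_1, e_2}$, so $G[\hat{M} \triangle M_2]$ is obtained from $G[M_1 \triangle M_2]$ by deleting two edges. In particular it still has no cycle components, and every connected component is a path (possibly a single vertex). The key structural fact I would single out is that each path component of $G[\hat{M} \triangle M_2]$ contains \emph{at most one} vertex unmatched in $\hat{M}$: interior vertices of a path component carry an $\hat{M}$-edge from the path, and among the two endpoints only the one whose incident path-edge lies in $M_2 \setminus \hat{M}$ is unmatched in $\hat{M}$. Concretely, this gives exactly one such endpoint for every even-alternating path component (including the sub-path $v_3, v_4, \ldots, v_\ell$ coming from $P$ when $\ell \ge 5$) and none for each $M_2$-augmenting path component.

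Next, I would establish $F \cap M_2 = \emptyset$. Any $f \in F$ has both endpoints unmatched in $\hat{M}$; such vertices are either unmatched in $M_1$ and distinct from $v_1$, or equal to $v_3$. A short case check then rules out $f \in M_2$: if both endpoints of $f$ are unmatched in $M_1$ then $M_1 \cup \set{f}$ is a matching, contradicting maximality of $M_1$; if one endpoint is $v_3$ then $f$ must be $v_3$'s $M_2$-edge $e_3$, forcing the other endpoint to be $v_4$, which is matched in $M_1$ by $e_4$, a contradiction. Hence $M' \triangle M_2 = (\hat{M} \triangle M_2) \sqcup F$ as a disjoint union of edges. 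Finally, I would suppose for contradiction that $C$ is a cycle component of $G[M' \triangle M_2]$. If $C$ uses no edge of $F$, then all edges of $C$ lie in $\hat{M} \triangle M_2$ and every vertex of $C$ has degree two already there, so $C$ would be a cycle component of $G[\hat{M} \triangle M_2]$, contradicting the first paragraph. Otherwise the $F$-edges partition $C$ into arcs, each being a nontrivial path in $G[\hat{M} \triangle M_2]$ whose two endpoints are distinct endpoints of two different $F$-edges and therefore both unmatched in $\hat{M}$. Since an arc is connected it sits inside a single path component of $G[\hat{M} \triangle M_2]$, so we would obtain two distinct unmatched-in-$\hat{M}$ vertices on the same component, contradicting the structural fact.

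The main obstacle I expect is the verification that $F \cap M_2 = \emptyset$; this uses the maximality of $M_1$ in the generic case and a direct check at the special vertex $v_3$. Once this is settled, the rest is a clean combinatorial argument about the components of $G[\hat{M} \triangle M_2]$.
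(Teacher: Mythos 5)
Your argument is correct and follows essentially the same route as the paper's proof: both rely on the identity $\hat{M} \triangle M_2 = (M_1 \triangle M_2) \setminus \{e_1, e_2\}$ and the resulting absence of $\hat{M}$-augmenting path components to contradict the existence of a cycle in $G[M' \triangle M_2]$, your arc decomposition merely spelling out in more detail the paper's terse final step about a component of $G[\hat{M} \triangle M_2]$ closing into a cycle via edges of $M' \setminus \hat{M}$. The side verification that $F \cap M_2 = \emptyset$ is harmless but unnecessary, since any edge of $F \cap M_2$ cancels in the symmetric difference $M' \triangle M_2$ and hence could not lie on a cycle anyway.
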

\begin{proof}
    Let $P$ be an even-alternating path component in $G[M_1 \triangle M_2]$.
    Then, $M'$ is defined to be $\comp{\hat{M}}$, where $\hat{M} = (M_1 \setminus \{e_2\}) \cup \{e_1\}$.
    As $\hat{M} \triangle M_2 = (M_1 \triangle M_2) \setminus \set{e_1, e_2}$, $G[\hat{M} \triangle M_2]$ has no $\hat{M}$-augmenting path components.
    If $G[M' \triangle M_2]$ has a cycle component $C$, then there must be an $\hat{M}$-augmenting path component $P'$ in $G[\hat{M} \triangle M_2]$ such that $P'$ together with some $e \in M' \setminus \hat{M}$ forms the cycle $C$, which is a contradiction.
\end{proof}

\begin{lemma}\label{lem:cycle}
    If $G[M_1 \triangle M_2]$ has a cycle component $C$, then there is a directed $(\size{M_1} - 1)$-thick path from $M_1$ to a maximal matching $M'$ of $G$ in $\mathcal G$ such that $m(M_1, M_2) < m(M', M_2)$ and $\size{M'} \ge \size{M_1}$. 
\end{lemma}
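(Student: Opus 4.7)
The plan is to break the cycle with a single $\mathcal G$-arc and, if that arc does not already deliver a matching of size at least $|M_1|$, to continue with one more application of Lemma~\ref{lem:goodpath} to an augmenting path that will appear in the leftover symmetric difference. Write $C=(v_1,v_2,\ldots,v_\ell,v_1)$ and $e_i=\{v_i,v_{i+1}\}$ (indices mod $\ell$). Since $C$ is a component of $G[M_1\triangle M_2]$, its edges alternate between $M_1$ and $M_2$, so $\ell$ is even, and after relabelling I may assume $e_1\in M_2$ and $e_2,e_\ell\in M_1$.

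The first step of the path is the $\mathcal G$-neighbour $\hat M=\mu((M_1\setminus\Gamma(e_1))\cup\{e_1\})$. Because $M_1$ is a matching, $\Gamma(e_1)\cap M_1=\{e_2,e_\ell\}$, so $\hat M\supseteq\tilde M:=(M_1\setminus\{e_2,e_\ell\})\cup\{e_1\}$, giving $|\hat M|\ge|M_1|-1$. Moreover $\tilde M\cap M_2=(M_1\cap M_2)\cup\{e_1\}$, since $e_1\in M_2$ and $e_2,e_\ell\notin M_2$, so $m(\hat M,M_2)\ge m(\tilde M,M_2)=m(M_1,M_2)+1$. If $|\hat M|\ge|M_1|$, choosing $M'=\hat M$ already yields a single arc that is $(|M_1|-1)$-thick and satisfies the two required inequalities.

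The interesting case is $|\hat M|=|M_1|-1$, which forces $\hat M=\tilde M$ and, in turn, $\ell\ge 6$: if $\ell=4$ then $v_3,v_4$ are both unmatched in $\tilde M$ and joined by $e_3$, so $\mu$ would extend by $e_3$ and lift us into the previous case. For $\ell\ge 6$ I will argue that $P'=(v_3,v_4,\ldots,v_\ell)$ is an $\hat M$-augmenting path component of $G[\hat M\triangle M_2]$: its edges $e_3,e_4,\ldots,e_{\ell-1}$ alternate between $M_2\setminus\hat M$ and $\hat M\setminus M_2$, the endpoints $v_3,v_\ell$ are unmatched in $\hat M$ (their only $M_1$-edges, namely $e_2$ and $e_\ell$, were removed), and the internal vertices $v_4,\ldots,v_{\ell-1}$ are already matched in both $\hat M$ and $M_2$ by cycle edges, so no further edge of $\hat M\triangle M_2$ can touch $P'$. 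Applying Lemma~\ref{lem:goodpath} to $\hat M$ and $M_2$ with the augmenting component $P'$ produces an $|\hat M|$-thick path in $\mathcal G$ from $\hat M$ to some $M''$ with $m(M'',M_2)>m(\hat M,M_2)$ and, because $P'$ is augmenting, $|M''|>|\hat M|=|M_1|-1$, hence $|M''|\ge|M_1|$. Prepending the arc $M_1\to\hat M$ gives a $(|M_1|-1)$-thick path from $M_1$ to $M'':=M'$, and the intersection inequality chains as $m(M_1,M_2)<m(\hat M,M_2)<m(M'',M_2)$.

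The delicate step, and the one I would write out most carefully, is confirming that $P'$ is a full connected component of $G[\hat M\triangle M_2]$ rather than only a subpath of one: any stray edge of $\hat M\triangle M_2$ incident to $v_3$ or $v_\ell$ would invalidate the direct appeal to Lemma~\ref{lem:goodpath}. The matching conditions on $\hat M$ and $M_2$ rule this out, as sketched above, but the verification has to be explicit. Beyond that, the remaining work is purely bookkeeping of cardinalities and of $m(\cdot,M_2)$ along the two-segment path.
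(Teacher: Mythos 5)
Your proof is correct and follows essentially the same route as the paper's: a single $\mathcal G$-arc that swaps one $M_2$-edge of the cycle for its two $M_1$-neighbours, the observation that $\ell=4$ (or any case where $\comp{\hat M}$ gains an edge) already gives $\size{M'}\ge\size{M_1}$, and otherwise an application of \Cref{lem:goodpath} to the residual $\hat M$-augmenting path component, exactly as in the paper up to a relabelling of the cycle. Your explicit verification that the residual path is a full component of $G[\hat M\triangle M_2]$ is a point the paper leaves implicit, but it is the same argument.
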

\begin{proof}
    Let $C = (v_1, v_2, \ldots, v_\ell)$ and for each $1 \le i \le \ell$, let $e_i = \{v_i, v_{i+1}\}$, where $v_{\ell + 1} = v_1$.
    Assume without loss of generality that $e_i \in M_1$ for odd $i$.
    Define $\hat{M} = (M_1 \setminus \set{e_1, e_3}) \cup \set{e_2}$.
    Clearly, $\hat{M}$ is a matching of $G$ and then $M' = \comp{\hat{M}}$ is a $\mathcal G$-neighbor of $M_1$.
    Similarly to \Cref{lem:goodpath}, we have $m(M_1, M_2) < m(M', M_2)$.
    If $\size{M_1} \le \size{M'}$, we are done.
    Moreover, if $\ell = 4$, $\hat{M} \cup \set{e_4}$ is a matching of $G$ and hence we have $\size{M_1} \le \size{M'}$ as well.
    Thus, suppose that $\ell \ge 6$ and $\size{M'} = \size{\hat{M}} = \size{M_1} - 1$.
    The subpath $P = (v_4, v_5, \ldots, v_{\ell+1})$ is an $M'$-augmenting path component in $G[M' \triangle M_2]$.
    By \Cref{lem:goodpath}, there is a directed path from $M'$ to a maximal matching $M''$ of $G$ in $\mathcal G$ such that $m(M', M_2) < m(M'', M_2)$ and $\size{M'} < \size{M''}$.
    Moreover, each maximal matching on the directed path has cardinality at least $\size{M'}$.
    This completes the proof of lemma.
\end{proof}

\begin{lemma}
    \label{lem:badpath}
     If $G[M_1 \triangle M_2]$ has an $M_2$-augmenting path component $P$, then there is a directed $(\size{M_1} - 1)$-thick path from $M_1$ to a maximal matching $M'$ of $G$ in $\mathcal G$ such that $m(M_1, M_2) < m(M', M_2)$ and $\size{M'} \ge \size{M_1} - 1$. 
\end{lemma}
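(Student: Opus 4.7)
The plan is to mimic the neighbor-construction in \Cref{lem:cycle}, but adapted to an $M_2$-augmenting path. Let $P = (v_1, v_2, \ldots, v_\ell)$ be the given $M_2$-augmenting path component and set $e_i = \{v_i, v_{i+1}\}$. Since $v_1$ and $v_\ell$ are unmatched in $M_2$, the incident edges $e_1$ and $e_{\ell-1}$ lie in $M_1$, so the edges alternate $M_1, M_2, M_1, M_2, \ldots, M_1$; in particular $P$ has an odd number of edges and $|M_1 \cap E(P)| = |M_2 \cap E(P)| + 1$. A short sanity check: the length must be at least three, because a single $M_1$-edge whose both endpoints are unmatched in $M_2$ could be appended to $M_2$, contradicting the maximality of $M_2$; by \Cref{prop:has-two-edges} we therefore have $\ell \geq 4$ (with $e_1, e_2, e_3$ in particular well-defined, $e_1, e_3 \in M_1$ and $e_2 \in M_2$).

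Next, I would define the candidate matching $\hat{M} = (M_1 \setminus \{e_1, e_3\}) \cup \{e_2\}$. It is a matching of $G$: the only edges of $M_1$ touching the endpoints of $e_2 = \{v_2, v_3\}$ are $e_1$ and $e_3$, which are removed. Then $M' := \comp{\hat{M}}$ is a $\mathcal{G}$-neighbor of $M_1$ obtained using the edge $e_2 \in E \setminus M_1$, since $(M_1 \setminus \Gamma(e_2)) \cup \{e_2\} = \hat{M}$.

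To compare the intersections with $M_2$, I would observe that $e_1, e_3 \in M_1 \setminus M_2$ while $e_2 \in M_2 \setminus M_1$. Therefore
\begin{linenomath}
\[
    m(M', M_2) \;\geq\; m(\hat{M}, M_2) \;=\; |M_1 \cap M_2| - 0 + 1 \;=\; m(M_1, M_2) + 1,
\]
\end{linenomath}
so $m(M_1, M_2) < m(M', M_2)$ as required. For the cardinality, since $|\hat{M}| = |M_1| - 1$ and $M' = \comp{\hat{M}}$ extends $\hat{M}$, we get $|M'| \geq |M_1| - 1$. The single arc $(M_1, M')$ in $\mathcal{G}$ is thus a $(|M_1| - 1)$-thick directed path with the claimed properties.

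I expect the only subtle point to be the justification that $\ell \geq 4$ and more generally that the triple $e_1, e_2, e_3$ on $P$ exists; this rules out pathological degenerate configurations and uses the maximality of $M_2$. Everything else is a direct computation of the intersection sizes, parallel to the arguments already used in \Cref{lem:goodpath} and \Cref{lem:cycle}, so no further machinery is needed.
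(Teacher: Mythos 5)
Your proposal is correct and follows essentially the same route as the paper's proof: the same $\hat{M} = (M_1 \setminus \{e_1, e_3\}) \cup \{e_2\}$, the same intersection count giving $m(M_1,M_2) < m(M',M_2)$, and the same one-arc $(\size{M_1}-1)$-thick path. Your extra parity/maximality argument for why $P$ has at least three edges is a fine (slightly more explicit) justification of the same fact the paper derives from \Cref{prop:has-two-edges} and the $M_2$-augmenting condition.
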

\begin{proof}
    The proof is almost analogous to that in \Cref{lem:goodpath}.
    Let $P = (v_1, v_2, \ldots, v_\ell)$ and let $e_i = \{v_i, v_{i+1}\}$ for $1 \le i < \ell$. 
    By \Cref{prop:has-two-edges}, $P$ contains at least two edges.
    Moreover, as $P$ is $M_2$-augmenting, $P$ contains at least three edges and $e_1, e_3 \in M_1$ and $e_2 \in M_2$. 
    Define $\hat{M} = (M_1 \setminus \set{e_1, e_3}) \cup \set{e_2}$.
    Then, $\hat{M}$ is a matching of $G$, and hence $M' = \comp{\hat{M}}$ is a $\mathcal G$-neighbor of $M_1$.
    As $e_1, e_3 \in M_1 \setminus M_2$ and $e_2 \in M_2 \setminus M_1$, we have $m(M_1, M_2) < m(M', M_2)$.
    Moreover, $\size{\hat{M}} = \size{M_1} - 1$. the lemma follows.
\end{proof}

\begin{corollary}\label{cor:badpath}
    Let $M'$ be the maximal matching of $G$ obtained in \Cref{lem:badpath}.
    If $G[M_1 \triangle M_2]$ has neither $M_1$-augmenting path components nor cycle components, then $G[M' \triangle M_2]$ has no cycle components.
\end{corollary}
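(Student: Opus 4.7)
The plan is to adapt the argument of \Cref{cor:goodpath} to the three-edge swap used in \Cref{lem:badpath}. The key intermediate claim is that $G[\hat{M} \triangle M_2]$ contains neither an $\hat{M}$-augmenting path component nor a cycle component; once this is established, the corollary follows by the same reasoning as in \Cref{cor:goodpath}: any cycle component in $G[M' \triangle M_2]$ would have to be formed by joining an $\hat{M}$-augmenting path in $G[\hat{M} \triangle M_2]$ with an edge of $M' \setminus \hat{M}$, which is impossible.

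To prove the intermediate claim I would first record that $\hat{M} \triangle M_2 = (M_1 \triangle M_2) \setminus \set{e_1, e_2, e_3}$, using $e_1, e_3 \in M_1 \setminus M_2$ and $e_2 \in M_2 \setminus M_1$. Every component of $G[M_1 \triangle M_2]$ disjoint from $P$ is therefore unchanged in $G[\hat{M} \triangle M_2]$, and since the matched/unmatched status of any vertex outside $\set{v_1, v_2, v_3, v_4}$ agrees between $M_1$ and $\hat{M}$, each such untouched component keeps its type. By hypothesis none of them was $M_1$-augmenting or a cycle, so they contribute neither cycles nor $\hat{M}$-augmenting path components to $G[\hat{M} \triangle M_2]$.

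The only potentially affected component is $P$ itself. Deleting $e_1, e_2, e_3$ either removes $P$ entirely (when $\ell = 4$) or leaves the subpath $P' = (v_4, \ldots, v_\ell)$. In the latter case I check that $P'$ is even-alternating in $G[\hat{M} \triangle M_2]$: $v_4$ becomes unmatched in $\hat{M}$ because its only $M_1$-edge $e_3$ was removed, whereas $v_\ell$ is still unmatched in $M_2$ but remains matched in $\hat{M}$ via $e_{\ell - 1}$. Hence $P'$ introduces neither a cycle nor an $\hat{M}$-augmenting component, completing the intermediate claim.

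The hardest part, just as in \Cref{cor:goodpath}, is turning the final step into a rigorous argument: given a hypothetical cycle $C$ in $G[M' \triangle M_2]$, I must verify that after removing an edge $e \in C \cap (M' \setminus \hat{M})$ the remaining subpath is genuinely a path \emph{component} of $G[\hat{M} \triangle M_2]$ whose two endpoints are both unmatched in $\hat{M}$. Both facts follow from $M'$ and $M_2$ being matchings: at an endpoint of $C - e$, any further incident $\hat{M}$-edge would produce two $M'$-edges at that vertex, and any further incident $M_2$-edge would produce two $M_2$-edges. Hence the endpoint is isolated in $G[\hat{M} \triangle M_2]$ aside from its unique $M_2$-edge within the subpath, yielding the required $\hat{M}$-augmenting path component and contradicting the intermediate claim.
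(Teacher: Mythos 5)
Your proposal is correct and follows essentially the same route as the paper: the paper's (two-line) proof likewise reduces to the observation that $G[\hat{M} \triangle M_2]$ has no $\hat{M}$-augmenting path components (nor cycle components, since $\hat{M} \triangle M_2 \subseteq M_1 \triangle M_2$) and then reuses the contradiction argument of Corollary~\ref{cor:goodpath}; you merely spell out the verification that the untouched components and the residual subpath $(v_4,\ldots,v_\ell)$ are harmless, which the paper leaves implicit.
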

\begin{proof}
    The proof is almost identical to that in \Cref{cor:goodpath}.
    Since $G[\hat{M} \triangle M_2]$ has no $\hat{M}$-augmenting path components, $G[M' \triangle M_2]$ has no cycle components.
\end{proof}

Now, we are ready to prove the main claims.

\begin{lemma}\label{lem:large-to-small}
    Let $M_1$ and $M_2$ be maximal matchings of $G$ with $\size{M_1} > \size{M_2}$.
    Then, there is a directed $\size{M_2}$-thick path from $M_1$ to $M_2$ in $\mathcal G$.
\end{lemma}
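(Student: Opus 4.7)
The plan is to prove the lemma by induction on $k = \size{M_2} - m(M_1, M_2) \ge 0$. The base case $k = 0$ is vacuous: by Proposition~\ref{prop:identical}, $m(M_1, M_2) = \size{M_2}$ would force $M_1 = M_2$, contradicting $\size{M_1} > \size{M_2}$. For the inductive step, I would perform a case analysis on the components of $G[M_1 \triangle M_2]$, in each case producing a $\mathcal{G}$-neighbor $M'$ of $M_1$ reached through a $\size{M_2}$-thick sub-path such that $m(M', M_2) > m(M_1, M_2)$ and $\size{M'} \ge \size{M_2}$, after which the induction hypothesis (or an auxiliary argument) completes the path to $M_2$.

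If $G[M_1 \triangle M_2]$ contains an $M_1$-augmenting or even-alternating path component, Lemma~\ref{lem:goodpath} supplies $M'$ with $\size{M'} \ge \size{M_1} > \size{M_2}$ via an $\size{M_1}$-thick arc. If no such component exists but a cycle component does, Lemma~\ref{lem:cycle} supplies $M'$ with $\size{M'} \ge \size{M_1} > \size{M_2}$ via a $(\size{M_1}-1)$-thick path, which is still $\size{M_2}$-thick because $\size{M_1} > \size{M_2}$. In both of these scenarios $\size{M'} > \size{M_2}$, so the induction hypothesis applied to $(M', M_2)$ yields a $\size{M_2}$-thick path to $M_2$ that I would concatenate.

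The remaining case is when every component of $G[M_1 \triangle M_2]$ is $M_2$-augmenting. Then Lemma~\ref{lem:badpath} yields $M'$ with $\size{M'} \ge \size{M_1} - 1 \ge \size{M_2}$ via a $(\size{M_1} - 1)$-thick path. If $\size{M'} > \size{M_2}$, the induction hypothesis finishes as before. The delicate subcase is $\size{M'} = \size{M_2}$: by Corollary~\ref{cor:badpath}, $G[M' \triangle M_2]$ has no cycle components, and the cardinality equality $\size{M'} = \size{M_2}$ forces the numbers of $M'$-augmenting and $M_2$-augmenting path components in $G[M' \triangle M_2]$ to coincide. If any augmenting components exist, I would apply Lemma~\ref{lem:goodpath} to an $M'$-augmenting one to obtain $M''$ with $\size{M''} > \size{M'} = \size{M_2}$ and then invoke the main induction hypothesis. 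Otherwise only even-alternating components remain, and I would iteratively apply Lemma~\ref{lem:goodpath}, producing a chain of maximal matchings all of size at least $\size{M_2}$ with strictly increasing $m(\cdot, M_2)$; by Corollary~\ref{cor:goodpath} the absence of cycles is preserved throughout, and the iteration terminates at $M_2$ (via Proposition~\ref{prop:identical}) once $m$ reaches $\size{M_2}$.

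The main obstacle is the boundary subcase $\size{M'} = \size{M_2}$ in the $M_2$-augmenting branch, where direct recursion is blocked by the lemma's strict hypothesis $\size{M_1} > \size{M_2}$. The resolution hinges on combining Corollary~\ref{cor:badpath} with a counting argument to confine the residual symmetric difference to even-alternating (or $M'$-augmenting) components, so that subsequent steps of Lemma~\ref{lem:goodpath}, stabilized by Corollary~\ref{cor:goodpath}, trace out a $\size{M_2}$-thick path from $M'$ to $M_2$.
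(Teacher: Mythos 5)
Your proposal is correct and follows essentially the same route as the paper: induction on $k = \size{M_2} - m(M_1, M_2)$ with the same case analysis via \Cref{lem:goodpath,lem:cycle,lem:badpath} and \Cref{cor:goodpath,cor:badpath}. The only difference is organizational: the paper absorbs the delicate boundary subcase $\size{M'} = \size{M_2}$ into a strengthened induction hypothesis (allowing $\size{M_1} = \size{M_2}$ when $G[M_1 \triangle M_2]$ has no cycle components), whereas you unroll exactly that case as an explicit inner iteration, which amounts to the same argument.
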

\begin{proof}
    We prove the lemma by induction on $k = \size{M_2} - m(M_1, M_2)$.
    Note that, by \Cref{prop:identical}, it holds that $k \ge 0$.
    Let $G' = G[M_1 \triangle M_2]$.
    We prove a slightly stronger claim: If either (1) $\size{M_1} > \size{M_2}$ or (2) $\size{M_1} = \size{M_2}$ and $G'$ has no cycle components,
    then there is a directed $\size{M_2}$-thick path from $M_1$ to $M_2$ in $\mathcal G$.
    The base case $k = 0$ follows from \Cref{prop:identical}.
    
    We assume that $k > 0$ and the lemma holds for all $k' < k$.
    As $k > 0$, $G'$ has at least one connected component.
    Suppose that $G'$ has a cycle component.
    In this case, it holds that $\size{M_1} > \size{M_2}$ from the assumption.
    By \Cref{lem:cycle}, $\mathcal G$ has a directed $(\size{M_1} - 1)$-thick path from $M_1$ to a maximal matching $M'$ of $G$ such that $m(M_1, M_2) < m(M', M_2)$ and $\size{M'} \ge \size{M_1}$.
    Applying the induction hypothesis to $M'$ and $M_2$, $\mathcal G$ has a directed $\size{M_2}$-thick path from $M'$ to $M_2$.
    As $\size{M_1} > \size{M_2}$, the directed path obtained by concatenating these two paths (from $M_1$ to $M'$ and from $M'$ to $M_2$) is a directed $\size{M_2}$-thick path from $M_1$ to $M_2$ and hence the claim follows in this case.
    
    Suppose that $G'$ has an $M_1$-augmenting path component.
    By \Cref{lem:goodpath}, $\mathcal G$ has a directed $\size{M_1}$-thick path from $M_1$ to $M'$ such that
    $m(M_1, M_2) < m(M', M_2)$ and $\size{M_1} < \size{M'}$.
    Since pair $M'$ and $M_2$ satisfies (1), by the induction hypothesis, $\mathcal G$ has a directed $\size{M_2}$-thick path from $M'$ to $M_2$, and hence the claim holds for this case.
    
    In the following, we assume that $G'$ has neither cycle components nor $M_1$-augmenting path components.
    Suppose that $G'$ has an even-alternating path component.
    By \Cref{lem:goodpath}, $\mathcal G$ has a directed $\size{M_1}$-thick path from $M_1$ to $M'$ such that
    $m(M_1, M_2) < m(M', M_2)$ and $\size{M_1} \le \size{M'}$.
    Moreover, by \Cref{cor:goodpath}, $G[M' \triangle M_2]$ has no cycle components.
    Thus, applying the induction hypothesis to $M'$ and $M_2$ proves the claim.
    
    Finally, suppose that $G'$ has only $M_2$-augmenting path components.
    By \Cref{prop:M1:is:larger:than:M2}, $\size{M_1} > \size{M_2}$.
    By \Cref{lem:badpath}, $\mathcal G$ has a directed $(\size{M_1} - 1)$-thick path from $M_1$ to $M'$ such that $m(M_1, M_2) < m(M', M_2)$ and $\size{M'} \ge \size{M_1} - 1$.
    Moreover, by \Cref{cor:badpath}, $G[M' \triangle M_2]$ has no cycle components.
    Thus, as $\size{M'} \ge \size{M_2}$, applying the induction hypothesis to $M'$ and $M_2$ proves the claim as well.
\end{proof}

\begin{lemma}\label{lem:maximum-to-maximum}
    Let $M_1$ and $M_2$ be maximum matchings of $G$.
    Then, there is a directed $(\nu(G) - 1)$-thick path from $M_1$ to $M_2$ in $\mathcal G$.
\end{lemma}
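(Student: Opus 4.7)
The plan is to induct on $k = \nu(G) - m(M_1, M_2)$, mirroring the structure of \Cref{lem:large-to-small}. The base case $k = 0$ follows immediately from \Cref{prop:identical}, which forces $M_1 = M_2$ and yields the empty path.

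For the inductive step, the key simplification is that maximality of both $M_1$ and $M_2$ eliminates the troublesome cases. Since $\size{M_1} = \size{M_2} = \nu(G)$, the symmetric-difference graph $G[M_1 \triangle M_2]$ can contain no $M_1$-augmenting path component (the symmetric difference of $M_1$ with such a component would be a matching of cardinality $\nu(G)+1$, contradicting maximality), and symmetrically no $M_2$-augmenting path component; alternatively, \Cref{prop:M1:is:larger:than:M2} rules out the mixed situation in which only one of the two kinds of augmenting components is present. Hence every component of $G[M_1 \triangle M_2]$ is either an even-alternating path or a cycle of length at least four (by \Cref{prop:has-two-edges}).

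Now pick any component of $G[M_1 \triangle M_2]$. If it is even-alternating, invoke \Cref{lem:goodpath} to obtain a $\mathcal G$-neighbor $M'$ of $M_1$ reached by a $\size{M_1}$-thick (hence $\nu(G)$-thick) directed path with $m(M_1, M_2) < m(M', M_2)$ and $\size{M'} \ge \nu(G)$. If it is a cycle, invoke \Cref{lem:cycle} to obtain $M'$ reached by a $(\size{M_1} - 1) = (\nu(G) - 1)$-thick directed path with the same two properties. In either case $\size{M'} = \nu(G)$, so $M'$ is itself a maximum matching; the path from $M_1$ to $M'$ is $(\nu(G)-1)$-thick; and $m(M', M_2) > m(M_1, M_2)$ strictly decreases the induction parameter. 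Applying the induction hypothesis to the pair $(M', M_2)$ and concatenating the two $(\nu(G)-1)$-thick paths yields the desired path from $M_1$ to $M_2$.

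The main obstacle, which turns out to be mild, is ruling out augmenting path components in $G[M_1 \triangle M_2]$; once that is done, the two remaining structural cases map cleanly onto the already-proven \Cref{lem:goodpath} and \Cref{lem:cycle}, and no analogue of \Cref{lem:badpath} (which handled the lossy $M_2$-augmenting case and only guaranteed $\size{M'} \ge \size{M_1} - 1$) is needed.
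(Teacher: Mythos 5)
Your proposal is correct and follows essentially the same route as the paper: induction on $\nu(G) - m(M_1, M_2)$, observing that maximum-ness of $M_1$ and $M_2$ excludes augmenting path components in $G[M_1 \triangle M_2]$, so only even-alternating paths and cycles remain, which are handled by \Cref{lem:goodpath} and \Cref{lem:cycle} and the resulting $\mathcal G$-neighbor $M'$ is again a maximum matching. No gaps; the argument matches the paper's proof in all essential respects.
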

\begin{proof}
    We prove the lemma by induction on $k = \size{M_2} - m(M_1, M_2)$.
    The base case $k = 0$ follows from \Cref{prop:identical}.
    We assume that $k > 0$ and the lemma holds for all $k' < k$.
    Let $G' = G[M_1 \triangle M_2]$.
    As $k > 0$, $G'$ has at least one connected component.
    Observe that every component of $G'$ is either an even-alternating component or a cycle component.
    This follows from the fact that if $G'$ has an $M_1$- or $M_2$-augmenting path component, then this path is an augmenting path for $M_1$ or $M_2$, respectively, which contradicts to the assumption that $M_1$ and $M_2$ are maximum matchings of $G$.
    By \Cref{lem:goodpath,lem:cycle}, $\mathcal G$ has a directed $(\size{M_1} - 1)$-thick path from $M_1$ to a maximal matching $M'$ of $G$ such that $m(M_1, M_2) < m(M', M_2)$ and $\size{M_1} \le \size{M'}$.
    As $M_1$ is a maximum matching of $G$, $M'$ is also a maximum matching of $G$.
    Applying the induction hypothesis to pair $M'$ and $M_2$ proves the lemma.
\end{proof}

Thus, we can enumerate all large maximal matchings in polynomial delay.
By simply traversing $\mathcal G$, we can enumerate all neighbor of $\mathcal G$ in $\order{nm}$ time.
However, to determine whether each neighbor has already been output, we need a data structure.

\begin{lemma}\label{lem:dup}
    Let $\mathcal M$ be a collection of maximal matchings.
    There is a data structure for representing $\mathcal M$ that supports the following operations:
    (1) Decide if $\mathcal M$ contains a given matching $M$ in $\order{n}$ time;
    (2) Insert a matching $M$ into $\mathcal M$ in $\order{n}$ time.
\end{lemma}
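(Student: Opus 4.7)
The plan is to store $\mathcal M$ in a trie keyed by a canonical encoding of each maximal matching, so that each operation simply walks a root-to-leaf path of length $O(n)$. Fix an arbitrary ordering $v_1, v_2, \ldots, v_n$ of $V$ and encode a matching $M$ by the ``partner sequence'' $\pi_M = (\pi_M(v_1), \ldots, \pi_M(v_n))$, where $\pi_M(v_i) = v_j$ if $\{v_i, v_j\} \in M$ and $\pi_M(v_i) = \bot$ otherwise. This sequence has length exactly $n$ and determines $M$ uniquely, so two matchings are equal if and only if their partner sequences coincide.

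Next, I would maintain a trie $T$ of depth $n$ over the alphabet $V \cup \{\bot\}$ of size $n+1$: the label of the edge from a node at depth $i-1$ to its child at depth $i$ records the value $\pi_M(v_i)$. Each matching stored in $\mathcal M$ corresponds to a unique root-to-leaf path in $T$; at each leaf we attach a flag indicating that its sequence belongs to $\mathcal M$. Membership of a query matching $M$ is tested by computing $\pi_M$ in $O(n)$ time (scanning the edges of $M$ once and filling the length-$n$ partner array) and then following, level by level, the child pointers dictated by $\pi_M(v_1), \pi_M(v_2), \ldots$; if at some level the required child is missing, $M \notin \mathcal M$. Insertion is identical, except that whenever a required child does not exist we create it, and finally we mark the last node as belonging to $\mathcal M$.

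The main obstacle is to guarantee that the per-level dictionary operation---``given a node $u$ of $T$ and a label $\ell \in V \cup \{\bot\}$, return the child of $u$ labelled $\ell$, creating it on demand''---runs in constant time, so that the overall cost is $O(n)$. Since labels lie in the fixed universe $\{1, 2, \ldots, n, \bot\}$, one clean option is to store, at each node, an array of size $n+1$ indexed by the label; lookup and assignment are then worst-case $O(1)$. To avoid paying $\Theta(n)$ for initialising this array at node creation, I would employ the standard initialisable-array trick (a pair of auxiliary arrays together with a counter that certifies which entries are ``live''), which supports $O(1)$ creation, read, and write. Alternatively, a hash table at each node yields $O(1)$ expected time per level and simpler code. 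Either implementation gives the claimed $O(n)$ bounds; the extra storage used by internal nodes, $O(n)$ per inserted matching plus array overhead, is acceptable since the algorithm of Section~\ref{sec:main} is already allowed exponential space. Putting these pieces together establishes both operations in $O(n)$ time, as required.
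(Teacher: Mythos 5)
Your proof is correct and takes essentially the same route as the paper: both reduce a matching to a canonical integer sequence of length $O(n)$ over an alphabet of size $O(n)$ and store these sequences in a trie, with membership and insertion done by walking one root-to-leaf path. The only differences are cosmetic---you use a partner sequence where the paper radix-sorts the edges into a lexicographic endpoint sequence, and you make the per-node constant-time child lookup explicit (constant-time-initializable arrays or hashing) where the paper simply cites the $\order{\ell + |\Sigma|}$ bound for radix tries.
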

\begin{proof}
    In this data structure, the set of vertices in $G$ is regarded as $n$ distinct integers $\set{1, 2, \ldots, n}$.
    Then, each edge $\{u, v\} \in E$ can be seen as an ordered pair of integers $(u, v)$ with $u < v$.
    We also define a natural lexicographical order on the edges of $E$: For two edges $e_1 = (u_1, v_1)$ and $e_2 = (u_2, v_2)$, $e_1$ is smaller than $e_2$ if and only if $u_1 < u_2$ or $u_1 = u_2$ and $v_1 < v_2$.
    Under this edge ordering, we can sort the edges in a matching $M = \{(u_1, v_1), (u_2, v_2), \ldots, (u_k, v_k)\}$ in $\order{n}$ time with radix sort.
    From the sorted edge sequence $((u_1, v_1), (u_2, v_2), \ldots, (u_k, v_k))$, we can uniquely encode a matching into the sequence of integers $(u_1, v_1, u_2, v_2, \ldots, u_k, v_k)$ and then regard each matching $M$ as such a sequence of integers.
    With a standard radix trie data structure~\cite{Morrison:PATRICIA:1968}, we can check if a given sequence in the data structure and insert a new sequence into the data structure in time $\order{\ell + |\Sigma|}$, where $\ell$ is the maximum length of a sequence and $\Sigma$ is an alphabet used in sequences, which is $O(n)$ for our sequences.
\end{proof}

\begin{theorem}\label{thm:nu>t}
    \Cref{algo:sg} enumerates all maximal matchings of $G$ with cardinality at least $t$ in $\order{nm}$ delay and exponential space, provided that $t < \nu(G)$.
\end{theorem}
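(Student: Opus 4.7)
The plan is to separate correctness from the delay bound. Correctness decomposes into soundness (every output is a maximal matching of cardinality at least $t$) and exhaustiveness (every such matching is reached), while the delay analysis hinges on (i) generating $\mathcal N_{\mathcal G}(M)$ in $\order{nm}$ time using the maximality of $M$, and (ii) performing duplicate detection in $\order{n}$ per neighbor via \Cref{lem:dup}. Soundness is essentially by inspection of \Cref{algo:sg}: every element of $\mathcal Q$ is of the form $\comp{\cdot}$ and hence maximal, and the inner guard $\size{M'} \ge t$ enforces the cardinality lower bound; the set $\mathcal S$ prevents the same matching from being enqueued, hence output, twice.

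For exhaustiveness, I would fix an arbitrary target maximal matching $M$ of $G$ with $\size{M} \ge t$ and exhibit a $t$-thick directed path in $\mathcal G$ from $M^*$ to $M$, which the BFS cannot truncate. If $\size{M} < \nu(G)$, apply \Cref{lem:large-to-small} with $(M_1, M_2) = (M^*, M)$ to obtain an $\size{M}$-thick, hence $t$-thick, path. If $\size{M} = \nu(G)$, apply \Cref{lem:maximum-to-maximum} to obtain a $(\nu(G)-1)$-thick path; since $t < \nu(G)$ by hypothesis, $\nu(G) - 1 \ge t$, so this path is also $t$-thick. Consequently the BFS discovers $M$, and $\mathcal S$ guarantees exactly one output.

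For the delay bound, the dominant work between two consecutive outputs is the expansion of the current node $M$. Because $M$ is maximal, the set $U_M$ of its unmatched vertices is independent in $G$. Fix $e = \{u, v\} \in E \setminus M$. Then $\Gamma(e) \cap M$ consists of at most the (at most one) $M$-edge through $u$ and the (at most one) $M$-edge through $v$, so $\hat{M} = (M \setminus \Gamma(e)) \cup \set{e}$ is formed in $\order{1}$ time and its only newly unmatched vertices are the corresponding endpoints $u', v' \notin \{u, v\}$ of the removed edges. Any edge of $G$ with both endpoints unmatched in $\hat{M}$ must be incident to $u'$ or $v'$, since otherwise both endpoints would lie in $U_M$, contradicting its independence. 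Hence $\comp{\hat{M}} = M_e$ is produced by scanning $\Gamma(u') \cup \Gamma(v')$ in $\order{n}$ time. Summing over the at most $m$ choices of $e$ gives $\order{nm}$ per expansion, and the $\order{n}$ membership/insertion cost of \Cref{lem:dup} over the same $\order{m}$ neighbors contributes a further $\order{nm}$. Writing $M$ out costs $\order{n}$, and the one-time computation of $M^*$ (e.g., via Micali--Vazirani) is absorbed into the first delay. The total delay is $\order{nm}$, while the space is exponential because $\mathcal S$ must retain every enumerated matching.

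The main obstacle I anticipate is the $\order{n}$ per-neighbor construction of $M_e$: a naive implementation of $\comp{\cdot}$ that greedily scans all of $E$ would cost $\order{m}$ per neighbor and push the delay to $\order{m^2}$. The $\order{n}$ shortcut genuinely relies on the dequeued $M$ being maximal, an invariant that holds for every element of $\mathcal Q$ because they are produced by $\comp{\cdot}$; this invariant has to be threaded through the argument. Once it is in place, the remainder is a direct assembly of the thick-path lemmas and the duplicate-free data structure already established.
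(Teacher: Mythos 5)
Your proposal is correct and follows essentially the same route as the paper: correctness via \Cref{lem:large-to-small,lem:maximum-to-maximum} (your case split on $\size{M} < \nu(G)$ versus $\size{M} = \nu(G)$ just makes explicit what the paper leaves implicit), and the $\order{nm}$ delay via the same key observation that, by maximality of $M$, the greedy completion of $(M \setminus \Gamma(e)) \cup \{e\}$ need only scan edges incident to the endpoints freed by the at most two removed matching edges, giving $\order{n}$ per neighbor, plus $\order{n}$ duplicate checks with \Cref{lem:dup}. No gaps.
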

\begin{proof}
    The correctness of the algorithm directly follows from \Cref{lem:large-to-small,lem:maximum-to-maximum}.
    We analyze the delay of the algorithm. 
    We first compute a maximum matching $M^*$ of $G$.
    This can be done in time $O(n^{1/2}m)$ using the algorithm of \cite{DBLP:conf/focs/MicaliV80}.
    Each maximal matching has $\mathcal G$-neighbors at most $m$.
    For each such $\mathcal G$-neighbor $M'$ of $M$, we can check whether $M' \in \mathcal S$ at line 8 in $O(n)$ time with the data structure given in \Cref{lem:dup}.
    Thus, it suffices to show that $\comp{(M \setminus \Gamma(e)) \cup \{e\}}$ can be computed in $O(n)$ time from given a maximal matching $M$ and $e \in E \setminus M$.
    Observe that $M \cap \Gamma(e)$ consists of at most two edges $f_1, f_2 \in E$, each of which is incident to one of the end vertices of $e$.
    Since at least one of end vertices of each edge in $E \setminus (\Gamma(f_1) \cup \Gamma(f_2))$ is matched in $(M \setminus \set{f_1, f_2}) \cup \{e\}$, we can compute $\comp{(M \setminus \Gamma(e)) \cup \{e\}}$ from $(M \setminus \Gamma(e)) \cup \{e\}$ by greedily adding edges in $\Gamma(f_1) \cup \Gamma(f_2)$.
    Since $\size{\Gamma(f_1) \cup \Gamma(f_2)} = O(n)$, this can be done in $O(n)$ time.
    Therefore, the theorem follows.
\end{proof}

To reduce the space complexity, we follow another well-known strategy, called the \emph{reverse search technique}, due to Avis and Fukuda~\cite{Avis::1996}.
The basic idea of this technique is to define a rooted tree $\mathcal T$ over the set of solutions instead of a directed graph.
The reverse search technique solely traverses this rooted tree and outputs a solution on each node in the tree.
A crucial difference from the supergraph technique is that we do not need exponential-space data structures used in the supergraph technique to avoid duplicate outputs.


\newcommand{\reconf}{\mathcal A}
\newcommand{\parent}{{\texttt{par}}}

Let $\mathcal G_{\ge t}$ be the subgraph of $\mathcal G$ induced by the maximal matchings of $G$ with cardinality at least $t$.
Our rooted tree $\mathcal T$ is in fact defined as a spanning tree of the underlying undirected graph of $\mathcal G_{\ge t}$.
To define the rooted tree $\mathcal T$, we select an arbitrary maximum matching $R^*$ of $G$ as a root.
In the following, we fix $R^*$ and define a parent function $\parent$ with respect to $R^*$.
We assume that the edges in $G$ are totally ordered with respect to some edge ordering.
Let $M$ be a maximal matching of $G$ with $\size{M} \ge t$ and $M \neq R^*$.
If $G[M\triangle R^*]$ contains a path component, then there is an edge $e \in R^* \setminus M$ that is incident to an end vertex of a path component in $G[M\triangle R^*]$.
Then, we choose the minimum edge (with respect to the edge ordering) satisfying this condition as $e$, and we define $\parent(M) = \comp{(M \setminus \Gamma(e)) \cup \{e\}}$.
Otherwise, we choose  the minimum edge in $R^* \setminus M$ as $e$, and we define $\parent(M) = \comp{(M \setminus \Gamma(e)) \cup \{e\}}$.
Note that as $R^*$ is a maximum matching of $G$, there is at least one path component in $G[M \triangle R^*]$ whose end vertex is matched in $R^*$ under the assumption that $G[M \triangle R^*]$ has a path component. 
Similarly to the proofs of \Cref{lem:goodpath,lem:cycle}, we have $m(\parent(M), R^*) > m(M, R^*)$ and $|\parent(M)| \ge \min\{|M|, \nu(G) - 1\}$.
This implies that the parent function defines a rooted tree $\mathcal T$ in $\mathcal G_{\ge t}$ with root $R^*$ by considering $\parent(M)$ is the parent of $M$.

Next, we consider the time complexity for computing $\parent(M)$.
As we have seen in the proof of \Cref{thm:nu>t}, $\comp{(M\setminus \Gamma(e)) \cup \set{e}}$ can be computed in $\order{n}$ time.
Given $M$ and $R^*$, we can compute the minimum edge $e$ in $G[M \triangle R^*]$ in $\order{n}$ time as $|M| + |R^*| \le n$.
Thus, $\parent(M)$ can be computed in $\order{n}$ time and the following lemma holds.

\begin{lemma}\label{lem:suc}
    Given a maximal matching $M$ of $G$ with $M \neq R^*$, we can compute $\emph{\parent}(M)$ in $\order{n}$ time.
\end{lemma}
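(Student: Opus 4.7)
The plan is to decompose the computation of $\parent(M)$ into two parts: (i) identifying the edge $e \in R^* \setminus M$ that defines $\parent(M)$, and (ii) computing the completion $\comp{(M \setminus \Gamma(e)) \cup \set{e}}$ from $M$ and $e$. Step (ii) was already handled inside the proof of \Cref{thm:nu>t}, where it is shown that such a completion is computable in $O(n)$ time by greedily extending $(M \setminus \Gamma(e)) \cup \set{e}$ using only the $O(n)$ edges in $\Gamma(f_1) \cup \Gamma(f_2)$, where $f_1, f_2$ are the at most two edges of $M \cap \Gamma(e)$. Hence the crux of the proof is step (i).

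For step (i), I would first compute the edge set $M \triangle R^*$ directly from the edge lists of $M$ and $R^*$ in $O(|M| + |R^*|) = O(n)$ time, which is justified because each matching has at most $n/2$ edges. Then I would build vertex-indexed adjacency lists of the subgraph $G[M \triangle R^*]$, restricted to vertices incident to $M \triangle R^*$; since $M$ and $R^*$ are matchings, every vertex has degree at most two in $G[M \triangle R^*]$, so these lists have total size $O(n)$. The connected components of $G[M \triangle R^*]$ are therefore disjoint simple paths and cycles, and they can be enumerated by a standard linear-time traversal.

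During the traversal I would classify each component as a path or a cycle. For each path component, I would inspect its two endpoints: whenever the unique $M \triangle R^*$-edge at such an endpoint lies in $R^* \setminus M$, I record it as a candidate for $e$. If at least one candidate was recorded, pick the minimum one in the fixed edge ordering; otherwise, by the discussion following the definition of $\parent$, $G[M \triangle R^*]$ contains no path component, so pick the minimum edge in $R^* \setminus M$ instead. Both choices reduce to a linear scan over at most $|R^*| \le n/2$ edges and therefore take $O(n)$ time. Feeding the selected $e$ into step (ii) yields $\parent(M)$ in the claimed $O(n)$ time.

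The main obstacle is simply to avoid ever scanning the full graph $G$: all work must be localized to the $O(n)$ edges of $M \cup R^*$ and to the $O(n)$ edges incident to the one or two $M$-edges destroyed when inserting $e$. Once that constraint is respected, the algorithm is elementary, and the $O(n)$ bound follows.
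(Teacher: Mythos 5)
Your proposal is correct and follows essentially the same route as the paper, which also splits the computation into locating the minimum eligible edge $e$ by scanning the $O(n)$ edges of $G[M \triangle R^*]$ (falling back to the minimum edge of $R^* \setminus M$ when no path component exists) and then reusing the $O(n)$-time completion argument from the proof of \Cref{thm:nu>t}. You simply spell out the component traversal and endpoint classification in more detail than the paper's terse inline discussion, and your justification that a missing candidate forces the absence of path components matches the paper's remark that $R^*$ being maximum guarantees such an endpoint whenever a path component exists.
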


Now, we are ready to describe our polynomial-space enumeration algorithm for {\sc Large Maximal Matching Enumeration}.
For maximal matchings $M$ and $M'$ of $G$ with cardinality at least $t$, $M'$ is a \emph{child} of $M$ if $\parent(M') = M$.
\Cref{algo:rs} recursively generates the set of children of a given maximal matching, which enable us to traverse all nodes in $\mathcal T$.

\begin{algorithm}[t]
    \caption{Given a graph $G$ and an integer $t < \nu(G)$, {\tt Reverse-Search} enumerates all maximal matchings of $G$ with cardinality at least $t$.}
    \label{algo:rs}
    \Procedure{\tt Reverse-Search($G, t$)}{
        Let $R^*$ be a maximum matching of $G$\;
        \Traversetree{$G, t, R^*$}\;
    }
    \Procedure{\Traversetree{$G, t, M$}}{
        Output $M$\;
        Let $\mathcal C$ be the children of $M$\label{algo:rs:enum:children}\;
        \ForEach{$M' \in \mathcal C$}{
            \Traversetree{$G, t, M'$}
        }
    }
\end{algorithm}

The following lemma is vital to bound the delay of the algorithm.

\begin{lemma}
    \label{lem:comp:children}
    Let $M$ be a maximal matching of $G$ with $\size{M} \ge t$. 
    Then, there are $O(n\Delta^2)$ children of $M$.
    Moreover, the children of $M$ can be enumerated in total time $\order{n^2\Delta^2}$.
\end{lemma}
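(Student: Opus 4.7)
The strategy is to show that every child $M'$ of $M$ in $\mathcal T$ is uniquely determined by a small local description relative to $M$ --- a triple $(e, f_1, f_2)$ --- and then to count these descriptions and verify each in $O(n)$ time.

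First I would observe that if $M'$ is a child of $M$ then $\parent(M')=M$, so $M = \comp{(M' \setminus \Gamma(e)) \cup \set{e}}$ where $e \in R^* \setminus M'$ is the edge selected by the parent rule applied to $M'$. Since $\comp{\cdot}$ only adds edges, $e \in M$, and so $e \in M \cap R^*$. This already restricts $e$ to at most $\size{M} \le n/2$ choices.

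Fix $e = \set{v_1, v_2} \in M \cap R^*$. Let $f_i$ denote the edge of $M'$ incident to $v_i$ (absent if $v_i$ is unmatched in $M'$); write $f_i = \set{v_i, u_i}$ when it exists; and let $g_i$ be the edge of $M$ incident to $u_i$ (absent if $u_i \notin V(M)$). Set $M_0 = (M' \setminus \set{f_1, f_2}) \cup \set{e\}$; then $M_0 \subseteq M$, and $M \setminus M_0$ is exactly the set of edges added by $\comp{\cdot}$. I would argue that every such edge lies in $\set{g_1, g_2}$: any added edge has both endpoints unmatched in $M_0$; the vertices unmatched in $M_0$ are those unmatched in $M'$ (minus $v_1, v_2$) together with the existing $u_i$'s; and by the maximality of $M'$, the vertices unmatched in $M'$ form an independent set, so no added edge can join two of them. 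Hence every added edge is incident to some $u_i$, and being in $M$ it must equal $g_i$. A short check shows the converse as well, giving
\[
M' \;=\; (M \setminus \set{e, g_1, g_2}) \cup \set{f_1, f_2},
\]
with the convention that absent items contribute nothing. Thus $M'$ is uniquely determined by the triple $(e, f_1, f_2)$.

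For each $e \in M \cap R^*$ there are at most $(\Delta+1)^2 = O(\Delta^2)$ pairs $(f_1, f_2)$, since each $f_i$ is either one of the $\le \Delta$ edges of $G$ incident to $v_i$ or is absent. Summing over $e$ gives an upper bound of $O(n\Delta^2)$ on the number of children. For the enumeration, iterate over all triples $(e, f_1, f_2)$; for each one, build the candidate $M'$ in $O(n)$ time, verify in $O(\Delta)$ time that $M'$ is a matching (it suffices to check $u_1 \ne u_2$) and that it is maximal (it suffices to check that the only possibly newly unmatched vertices --- the ``far'' endpoints of $g_1, g_2$ --- have no unmatched neighbors in $M'$), confirm $\size{M'} \ge t$ in $O(1)$ time, compute $\parent(M')$ via \Cref{lem:suc} in $O(n)$ time, and accept $M'$ precisely when $\parent(M') = M$. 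The total cost is $O(n\Delta^2) \cdot O(n) = O(n^2 \Delta^2)$.

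The main obstacle is the characterization step, namely showing that $M'$ is determined by $(e, f_1, f_2)$. The decisive ingredient is the standard fact that the vertices unmatched by a maximal matching form an independent set, which forces $\comp{\cdot}$ to introduce only edges incident to $u_1$ or $u_2$ and in turn pins down $M \setminus M_0$ as a subset of $\set{g_1, g_2}$.
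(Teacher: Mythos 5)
Your proposal is correct and follows essentially the same route as the paper's proof: each child is pinned down by the pivot edge $e \in M$ together with the at most two $M'$-edges adjacent to $e$ (your $(f_1,f_2)$, the paper's $F = M' \cap \Gamma(e)$), the maximality of $M'$ forces the completion to add only the $M$-edges at the far endpoints of these (your $g_i$, the paper's $A$), giving the same $O(n\Delta^2)$ count, and each candidate is rebuilt and verified against $\parent(M') = M$ via \Cref{lem:suc} in $O(n)$ time. Your observations that $e$ may further be restricted to $M \cap R^*$ and the explicit independent-set justification are harmless refinements of the same argument.
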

\begin{proof}
    Let $M'$ be a child of $M$.
    From the definition of the parent-child relation, we have
    \begin{align*}
        \size{M \triangle M'} = \size{\comp{(M' \setminus \Gamma(e)) \cup \{e\}} \triangle M'} \le 5.
    \end{align*}
    This inequality follows from the facts that $\size{M' \cap \Gamma(e)} \le 2$ and $\size{M \setminus \comp{(M' \setminus \Gamma(e)) \cup \{e\}}} \le 2$ (as observed in the proof of \Cref{thm:nu>t}).
    Thus, we can enumerate all the children $\mathcal C = \inset{M'}{\parent(M') = M}$ of $M$ in polynomial time.
    To improve the running time of enumerating children in $\mathcal C$, we take a closer look at both $M$ and $M'$.
    
    Let $e$ be the edge such that $M = \parent(M') = \comp{(M' \setminus \Gamma(e)) \cup \set{e}}$ and let $F = M'\cap\Gamma(e)$. 
    As observed above, $\size{F} \le 2$.
    Let $A$ be the edge set such that $M = (M' \setminus F) \cup \set{e} \cup A$.
    Each $f \in A$ is incident to an edge $f'$ in $F$ as $M'$ is a maximal matching of $G$.
    This implies that $A \subseteq M$ is uniquely determined from $e \in M$ and $F \subseteq \Gamma(e)$ with $\size{F} \le 2$ (i.e., $A = \inset{f \in (M \setminus \{e\}) \cap \Gamma(f')}{f' \in F}$).
    Thus, we have $|\mathcal C| = O(n\Delta^2)$.
    For each $e \in M$ and $F \subseteq \Gamma(e)$ with $|F| \le 2$, we can compute $M'$ as $M' = M \setminus (A \cup \{e\}) \cup F$ in $\order{n}$ time and, by \Cref{lem:suc}, check if $\parent(M') = M$ in $\order{n}$ time as well.
    Hence we can enumerate all children in $\mathcal C$ in $\order{n^2\Delta^2}$ time.
    %
\end{proof}

We output a solution for each recursive call.
This implies that the delay of the algorithm is upper bounded by the running time of \Cref{algo:rs:enum:children} in~\Cref{algo:rs}.
Since $m \le n\Delta$, the delay of the algorithm is bounded by $\order{n^{1/2}m + n^2\Delta^2} = \order{n^2\Delta^2}$ as well.
Moreover, as the depth of $\mathcal T$ is at most $|R^*|$, the algorithm runs in polynomial space.

Finally, when $t = \nu(G)$, we can enumerate all maximum matchings using the binary partition technique in $\order{nm}$ delay and polynomial space.
By combining these results, we obtain the following theorem.


\begin{theorem}
     One can enumerate all maximal matchings of $G$ with cardinality at least $t$ in $\order{n^2\Delta^2}$ delay with polynomial space.
\end{theorem}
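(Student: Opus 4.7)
The plan is to assemble the pieces already established. For the case $t < \nu(G)$, I would invoke \Cref{algo:rs} and verify correctness by checking the two properties $m(\parent(M), R^*) > m(M, R^*)$ and $|\parent(M)| \ge \min\{|M|, \nu(G) - 1\}$ for every maximal matching $M$ with $M \neq R^*$. The former is the same symmetric-difference argument that drives \Cref{lem:goodpath} and \Cref{lem:cycle}, while the latter ensures that $\parent(M)$ still lies in $\mathcal G_{\ge t}$ since $|M| \ge t$ and $\nu(G) - 1 \ge t$ under our hypothesis. Together with \Cref{prop:identical} they make iterating $\parent$ terminate at $R^*$, so the parent function induces a spanning tree $\mathcal T$ of $\mathcal G_{\ge t}$ rooted at $R^*$ and \Cref{algo:rs} outputs every node of $\mathcal T$ exactly once.

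For the delay bound, I would observe that a matching is output at every recursive entry, and that the work between two consecutive outputs consists of (i) enumerating the children at \Cref{algo:rs:enum:children}, which costs $\order{n^2\Delta^2}$ by \Cref{lem:comp:children}, and (ii) unwinding the recursion by at most $|R^*| \le n/2$ levels, each of which triggers one more child enumeration when moving to a sibling; both are absorbed into $\order{n^2\Delta^2}$. The preprocessing cost of computing $R^*$ via the Micali--Vazirani algorithm is $O(n^{1/2}m) = O(n^{3/2}\Delta)$, which is likewise absorbed, so the very first delay is within budget as well. Concerning space, each recursive frame stores only the current matching and its list of at most $O(n\Delta^2)$ children (\Cref{lem:comp:children}), and the recursion depth is bounded by the depth of $\mathcal T$, hence by $|R^*| \le n/2$; this gives $O(n^2\Delta^2)$ space overall, which is polynomial. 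Crucially, no global set of previously produced solutions is needed, because the parent function enforces uniqueness a priori.

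For the remaining boundary case $t = \nu(G)$, I would invoke the $\order{nm}$-delay, polynomial-space enumeration of maximum matchings via binary partition referenced just above the theorem: here the extension problem reduces to deciding whether a matching of prescribed cardinality exists in a graph obtained by deleting a fixed set of edges and vertices, which is polynomial-time solvable as recalled at the start of \Cref{sec:hard}. Since $\order{nm} = O(n^2\Delta)$ is subsumed by $\order{n^2\Delta^2}$, combining the two cases yields the stated bound.

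I expect the most delicate point to be re-verifying the inequality $|\parent(M)| \ge \min\{|M|, \nu(G)-1\}$ in the case where $G[M \triangle R^*]$ consists solely of cycle components: here the edge $e \in R^* \setminus M$ used in the definition of $\parent$ is simply the minimum such edge, so one must argue, as in \Cref{lem:cycle}, that the augmentation along the cycle containing $e$ loses at most one edge and that $|M|$ cannot exceed $\nu(G)$, which together with the maximality of $R^*$ gives the bound. Once this is in place the theorem follows.
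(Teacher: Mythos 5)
Your proposal is correct and follows essentially the same route as the paper: reverse search on the tree induced by $\parent$ rooted at a maximum matching $R^*$ for $t < \nu(G)$ (with the delay coming from \Cref{lem:comp:children} and polynomial space from the depth bound $|R^*|$), combined with the $\order{nm}$-delay binary-partition enumeration of maximum matchings for $t = \nu(G)$. The delicate point you single out, namely $|\parent(M)| \ge \min\{|M|, \nu(G)-1\}$ in the all-cycle case, is exactly the observation the paper makes by appeal to \Cref{lem:cycle}, so no further argument is needed.
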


\subsection{{\sc \texorpdfstring{$k$}{k}-Best Maximal Matching Enumeration}}
In the previous subsection, we define the graph $\mathcal G$, allowing us to enumerate all the maximal matchings of $G$ with cardinality at least given threshold $t$. 
The key to this result is \Cref{lem:large-to-small}, which states that, for any $t < \nu(G)$, there is a directed $t$-thick path from a maximum matching $M_1$ to a maximal matching $M_2$ with cardinality $t$.
This implies that every maximal matching of cardinality at least $t$ is ``reachable'' from a maximum matching in the subgraph of $\mathcal G$ induced by the node set $\{M: M \text{ is a maximal matching of } G, |M| \ge t\}$ for every $t < \nu(G)$.
From this fact, we can extend \Cref{algo:sg} to an algorithm for {\sc $k$-Best Maximal Matching Enumeration}, which is shown in \Cref{algo:k-best}.
The algorithm first enumerates all maximum matchings of $G$ with the algorithm $\mathcal A$ and outputs those maximum matchings as long as at most $k$ solutions are output.
The remaining part of the algorithm is almost analogous to \Cref{algo:sg} and the essential difference from it is that the algorithm chooses a largest maximal matching in the priority queue $\mathcal Q$ at line 9.
Intuitively, we traverse a forest based on best-first manner.

\begin{algorithm}[t]
    \caption{Given a graph $G$ and a non-negative integer $k$, the algorithm solves {\sc $k$-Best Maximal Matching Enumeration}}
    \label{algo:k-best}
    \Procedure{\tt $k$-best($G, k$)}{
        Let $\mathcal A$ be a polynomial delay enumeration algorithm for maximum matchings.\;
        \ForEach{$M$ generated by $\mathcal A(G)$}{
            Output $M$ and add $M$ to queue $\mathcal Q$\;
            \lIf{$k$ solutions are output}{{\bf halt}}
            \ForEach{$M' \in \mathcal N_{\mathcal G}(M)$ with $M' \notin \mathcal S$}{
                Add $M'$ to $\mathcal Q$ and to $\mathcal S$
            }
        }
        \While{$\mathcal Q$ is not empty}{
            Let $M$ be a largest maximal matching in $\mathcal Q$\;
            Output $M$ and delete $M$ from $\mathcal Q$\;
            \lIf{$k$ solutions are output}{{\bf halt}}
            \ForEach{$M' \in \mathcal N_{\mathcal G}(M)$ with $M' \notin \mathcal S$}{
                Add $M'$ to $\mathcal Q$ and to $\mathcal S$
            }
        }
    }
\end{algorithm}

\begin{theorem}\label{theo:kb:time}
    We can solve {\sc $k$-Best Maximal Matching Enumeration} in $O(nm)$ delay.
\end{theorem}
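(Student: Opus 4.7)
The plan is to verify that \Cref{algo:k-best} outputs $k$ distinct maximal matchings in non-increasing order of cardinality, and that the delay between any two consecutive outputs is $O(nm)$.

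For correctness, I treat $\mathcal S$ as the set of ``discovered'' matchings so that each matching is output at most once. The first loop outputs (a prefix of) the maximum matchings of $G$ via $\mathcal A$, so its output order is trivially non-increasing. For the second loop, I would establish the invariant: whenever the algorithm pops a matching $M$ from $\mathcal Q$, every maximal matching $M'$ of $G$ not yet output satisfies $|M'| \le |M|$. If $M' \in \mathcal S$ but not yet output, then $M' \in \mathcal Q$, so the heap property gives $|M'| \le |M|$. Otherwise $M' \notin \mathcal S$, and by \Cref{lem:large-to-small} (using the fact that every maximum matching has been placed in $\mathcal S$ during the first loop) there is a directed $|M'|$-thick path $N_0, N_1, \ldots, N_\ell = M'$ in $\mathcal G$ starting from a maximum matching $N_0 \in \mathcal S$. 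Let $i$ be the largest index with $N_i \in \mathcal S$; since $N_\ell \notin \mathcal S$, we have $i < \ell$. Because $N_{i+1}$ is a $\mathcal G$-neighbor of $N_i$ but is missing from $\mathcal S$, the matching $N_i$ cannot have been popped (popping it would have inserted $N_{i+1}$ into $\mathcal S$), so $N_i \in \mathcal Q$ and hence $|M| \ge |N_i| \ge |M'|$. Iterating this invariant shows the outputs appear in non-increasing order of cardinality, so the $k$ matchings produced form a valid $k$-best set.

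For the delay, the first loop's cost per output is dominated by one step of $\mathcal A$, realized in $O(nm)$ via the binary-partition algorithm for maximum matchings mentioned at the end of the previous subsection, plus the processing of the $O(m)$ $\mathcal G$-neighbors of the current matching. As in the proof of \Cref{thm:nu>t}, each neighbor $\comp{(M \setminus \Gamma(e)) \cup \set{e}}$ is built in $O(n)$ time, and membership in $\mathcal S$ is tested in $O(n)$ time using the trie of \Cref{lem:dup}. In the second loop, the same neighbor processing contributes $O(nm)$ per output, and the single extract-max together with up to $m$ insertions into $\mathcal Q$ costs $O(m \log |\mathcal Q|) = O(nm)$ when $\mathcal Q$ is maintained as a binary heap, since $|\mathcal Q|$ is bounded by the number of maximal matchings of $G$ and hence $\log |\mathcal Q| = O(n)$. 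The total delay is therefore $O(nm)$.

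The main obstacle is the correctness invariant: the graph $\mathcal G$ is not monotone in cardinality, so popping the largest matching currently in $\mathcal Q$ need not a priori yield a globally largest unvisited maximal matching. What rescues the best-first strategy is precisely the thickness guarantee of \Cref{lem:large-to-small}, which forbids ``detours'' through matchings smaller than the target and forces the frontier toward any hidden large matching to already be sitting in $\mathcal Q$ at a priority at least its cardinality.
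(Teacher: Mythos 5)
Your correctness argument is essentially the paper's: both rest on the $\size{M'}$-thick path of \Cref{lem:large-to-small} from a maximum matching to any undiscovered maximal matching $M'$, concluding that when anything is popped, some matching of cardinality at least $\size{M'}$ is already in $\mathcal Q$. Your explicit invariant is, if anything, stated more carefully than the paper's minimal-counterexample contradiction, and that half of the proposal is fine, as is the accounting of the first loop (the $\order{nm}$-delay algorithm $\mathcal A$ of \Cref{thm:nu=t}) and of the $\order{m}$ neighbor constructions and trie lookups at $\order{n}$ each.

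The delay analysis, however, contains a step that fails: you implement $\mathcal Q$ as a binary heap and charge $O(m\log\size{\mathcal Q}) = O(nm)$ per output on the grounds that $\size{\mathcal Q}$ is at most the number of maximal matchings of $G$ and hence $\log\size{\mathcal Q} = O(n)$. That estimate is false in general: the number of maximal matchings can be $2^{\Theta(n\log n)}$ --- for instance, the maximal matchings of the complete graph $K_{2n}$ are exactly its $\frac{(2n)!}{2^n\cdot n!}$ perfect matchings (cf.\ \Cref{fig:many_small_matchings}) --- so $\log\size{\mathcal Q}$ can be $\Theta(n\log n)$ and a comparison-based heap only gives $O(nm\log n)$ delay, not the claimed $O(nm)$. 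The paper sidesteps this by exploiting that every priority is an integer in $\{1,\ldots,\nu(G)\}$: $\mathcal Q$ is kept as $\nu(G)$ linked lists (a bucket queue), one per cardinality, so insertion and extraction of a largest matching take constant time, and the per-output cost is dominated by generating the $O(m)$ $\mathcal G$-neighbors with $O(n)$-time membership tests, exactly as in \Cref{thm:nu>t}. Substituting this data structure (or any priority queue with $O(1)$, or even $O(n)$, operations on small integer keys) for your heap repairs the argument and recovers the stated $O(nm)$ bound.
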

\begin{proof}
    The delay of the algorithm follows from a similar analysis in \Cref{thm:nu>t}.
    Note that, at line 9, we can choose in time $O(1)$ a largest maximal matching in $\mathcal Q$ by using $\nu(G)$ linked lists $L_1, L_2, \ldots, L_{\nu(G)}$ for $\mathcal Q$, where $L_i$ is used for maximal matchings of cardinality $i$.
    
    Let $\mathcal M$ be the set of maximal matchings of $G$ that are output by the algorithm.
    To show the correctness of the algorithm, suppose for contradiction that there are maximal matchings $M \notin \mathcal M$ and $M' \in \mathcal M$ of $G$ such that $\size{M} > \size{M'}$.
    Since $\mathcal G$ has a directed $\size{M}$-thick path from a maximum matching of $G$ to $M$, we can choose such $M$ so that every maximal matching on the path except for $M$ belongs to $\mathcal M$.
    Let $M''$ be the immediate predecessor of $M$ on the path.
    As $M'' \in \mathcal M$, $M$ must be in the queue $\mathcal Q$ at some point.
    Hence as $\size{M} > \size{M'}$, the algorithm outputs $M$ before $M'$.
\end{proof}

\section{Enumerating maximum matchings} 

To handle the other case $t = \nu(G)$, as mentioned in \Cref{sec:hard}, we solve {\sc Large Maximal Matching Enumeration} with the binary partition technique.
In this technique, we recursively partition the set $\mathcal M(G)$ of all maximum matchings of $G$ into subsets. 
The recursive procedure is defined as follows:
For a pair $I$ and $O$ of disjoint subsets of edge set $E$,
let $\mathcal M(G, I, O)$ be the subset of $\mathcal M(G)$ consisting of all maximum matchings $M$ of $G$ that satisfies $I \subseteq M$ and $M \cap O = \emptyset$. 
Initially, we set $I = O = \emptyset$ and hence we have $\mathcal M(G, I, O) = \mathcal M(G)$. 
Given $I, O \subseteq E$ with $I \cap O = \emptyset$, we find a maximum matching $M$ of $G$ such that $I \subseteq M$ and $O \cap M = \emptyset$ if it exists.
Such a maximum matching can be found by finding a matching of size $\nu(G) - \size{I}$ in the graph obtained from $G$ by deleting all end vertices of $I$ and edges in $O$.
Let $M \setminus I = \{e_1, e_2, \ldots, e_k\}$ and let $\mathcal M_i = \mathcal M(G, I \cup \{e_1, \ldots, e_{i-1}\}, O \cup \{e_i\})$ for $1 \le i \le k$.
Then, $\{\mathcal M_1, \mathcal M_2, \ldots, \mathcal M_k, \{M\}\}$ is a partition of $\mathcal M(G, I, O)$.
This partition process naturally yields a recursive enumeration algorithm, which defines a rooted tree whose node is labeled $(G, I, O)$ and outputs a maximum matching in $\mathcal M(G, I, O)$ at each node labeled $(G, I, O)$ in the rooted tree.
To upper bound the delay, we need to ensure that the algorithm outputs a maximum matching at each node in the rooted tree.
This can be done by checking whether $\mathcal M_i$ is nonempty for each $1 \le i \le k$.
Therefore, at each node, we additionally solve the maximum matching problem at most $k$ time and as $k \le n / 2$, the delay of the algorithm is upper bounded by $O(\min(n^{3/2}m, n^{\omega + 1}))$, where $\omega < 2.37$ is the exponent of matrix multiplication.

To further improve the running time of checking whether $\mathcal M_i = \mathcal M(G, I \cup \{e_1, \ldots, e_{i-1}\}, O \cup \{e_i\})$ is nonempty, we use the fact that $M$ is a maximum matching of $G$ with $I \subseteq M$ and $O \cap M = \emptyset$.
Let $I' = I \cup \{e_1, \ldots, e_{i-1}\}$ and $O' = O \cup \{e_i\}$.
Let $G'$ be the graph obtained from $G$ by deleting all the end vertices of edges in $I'$ and edges in $O'$.
Recall that it suffices to check whether $G'$ has a matching of cardinality $\nu(G) - \size{I'}$.
As $\size{M} = \nu(G)$, we have $\size{M \cap E(G')} = \size{M \setminus (I' \cup \{e_i\})}= \nu(G) - \size{I'} - 1$.
Moreover, since $M \cap E(G')$ is a matching of $G'$, it suffices to check whether $G'$ has an $(M \cap E(G'))$-augmenting path.
This can be done in time $O(m)$ \cite{DBLP:journals/computing/KamedaM74,DBLP:journals/jacm/Gabow76} and hence the delay is improved to $\order{nm}$.

\begin{theorem}\label{thm:nu=t}
    We can enumerate all maximum matchings of $G$ in delay $\order{nm}$ with polynomial space.    
\end{theorem}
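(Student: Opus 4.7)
The plan is to instantiate the binary partition technique outlined in the discussion above, using single-augmenting-path searches rather than full matching recomputations to decide nonemptiness of each subproblem. I would set up the recursion on triples $(G, I, O)$ with $I \cap O = \emptyset$, maintaining as an invariant that at each node we already hold a witness maximum matching $M \in \mathcal{M}(G, I, O)$. The root $(G, \emptyset, \emptyset)$ is handled by one $O(n^{1/2}m)$ preprocessing call to the Micali--Vazirani algorithm, which produces the initial witness.

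At a node with witness $M$, I would output $M$, enumerate $M \setminus I = \{e_1, \ldots, e_k\}$, and for each $i$ test whether $\mathcal{M}_i = \mathcal{M}(G, I \cup \{e_1,\ldots,e_{i-1}\}, O \cup \{e_i\})$ is nonempty, recursing exactly on the nonempty ones. That $\{\mathcal{M}_1, \ldots, \mathcal{M}_k, \{M\}\}$ partitions $\mathcal{M}(G, I, O)$ is a routine check: any $M' \in \mathcal{M}(G, I, O)$ with $M' \neq M$ must avoid at least one $e_j$, so assigning $M'$ to $\mathcal{M}_i$ for the smallest such $i$ gives a unique classification and rules out duplicates.

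The key to $\order{nm}$ delay is the augmenting-path argument foreshadowed in the discussion. Writing $I' = I \cup \{e_1,\ldots,e_{i-1}\}$, $O' = O \cup \{e_i\}$, and $G'$ for $G$ with all endpoints of $I'$ and all edges of $O'$ removed, the matching $F := M \cap E(G') = M \setminus (I' \cup \{e_i\})$ has cardinality $\nu(G) - |I'| - 1$. The subproblem $\mathcal{M}_i$ is nonempty if and only if $G'$ contains a matching of size $\nu(G) - |I'|$, i.e., if and only if $F$ admits an augmenting path in $G'$. A single-augmenting-path search in a general graph runs in $O(m)$ time via the subroutines of Kameda--Munro and Gabow cited by the authors; when a path exists its rotation immediately yields the witness maximum matching needed for the recursive call, so emptiness testing and witness production are bundled into one $O(m)$ computation.

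Putting the pieces together, each recursive node performs at most $k \le n/2$ such tests, so the work between two consecutive outputs is $\order{nm}$, giving the claimed delay. Space stays polynomial because $|I|$ strictly grows along any root-to-leaf path, bounding the recursion depth by $n/2$, with only $O(m)$ bookkeeping per stack frame. The main obstacle I anticipate is the careful handling of blossoms so that one genuinely stays within the $O(m)$ bound for a single augmenting search in general graphs, and threading the subroutine's interface so that the very search that decides emptiness of $\mathcal{M}_i$ also delivers the next witness --- otherwise one would incur an extra full matching call and lose the improvement over the naive $O(\min(n^{3/2}m, n^{\omega+1}))$ bound.
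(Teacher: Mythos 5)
Your proposal is correct and follows essentially the same route as the paper: binary partition on $(G,I,O)$ with a witness maximum matching at each node, and nonemptiness of each $\mathcal M_i$ decided by a single $O(m)$ augmenting-path search for $M \cap E(G')$ in $G'$, giving $O(nm)$ delay and polynomial space. The extra details you supply (the disjointness of the partition, bundling the emptiness test with witness production, and the recursion-depth bound) are consistent with, and slightly more explicit than, the paper's argument.
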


It would be worth noting that the extension problem for enumerating {\em maximum} matching can be solved in polynomial time, whereas the {\em maximal} counterpart is intractable as proved in \Cref{theo:hard}.

\bibliographystyle{plain}
\bibliography{main.bib}
\end{document}